\newtheorem{lemma}{Lemma}
\newtheorem{definition}{Definition}
\newtheorem{theorem}{Theorem}
\newtheorem{corollary}{Corollary}
\def\Prf{\vspace{2ex}\noindent{\bf Proof: }}
\def\endpf{\hfill{$\blacksquare$}}
\def\arcsinh{\mbox{arcsinh}}
\newcommand{\JOAC}{JOAC\xspace}
\newcommand{\TJOAC}{T-JOAC\xspace}
\newcounter{cori}
 \newcounter{mand}
 \newcommand{\shadow}{}
\begin{document}



\title{Joint Traffic Offloading and  Aging Control  in  \\ 5G IoT Networks  } 
\author{Naresh Modina\IEEEauthorrefmark{1}, Rachid El-Azouzi\IEEEauthorrefmark{1}, Francesco De Pellegrini\IEEEauthorrefmark{1}, Daniel Sadoc  Menasche\IEEEauthorrefmark{4} and Rosa Figueiredo\IEEEauthorrefmark{1}\thanks{This project was partially sponsored by CAPES, CNPq and FAPERJ, through grants  E-26/203.215/2017 and E-26/211.144/2019.
}\\
\thanks{This work was a part of the project MAnagEment of Slices in The Radio access Of 5G networks (MAESTROG)}
\thanks{An early version of this manuscript has appeared in the proceedings of the International Teletraffic Congress (ITC) 2020.}
\IEEEauthorrefmark{1}University of Avignon, France;  \IEEEauthorrefmark{4}Federal University of Rio de Janeiro, Brazil.}
\maketitle 

\def\Prf{\vspace{2ex}\noindent{\bf Proof: }}
\def\endpf{\hfill{$\blacksquare$}}
\def\arcsinh{\mbox{arcsinh}}
\newcommand {\gx} {\gamma_x}
\newcommand {\gy} {\gamma_y}
\newcommand {\T}{T_{int}}
\newcommand {\E}{\mathbb{E}}
\newcommand {\R}{\mathbb{R}}
\newcommand {{\Loc}}{\mathcal{L}}
\renewcommand{\L}{\mathcal{L}}
\newcommand {{\Tseconds}}{\kappa}
\newcommand {\thresholdvector}{{\bm \tau}}
\newcommand {\threshold}{{\tau}}
\newcommand {\thresholdrange}[1]{{\tau}^{({#1})}}
\newcommand {\thresholdloc}[1]{{\tau}_{{#1}}}

\newcommand{\M}{\mathcal M}
\newcommand{\OP}{\texttt{OptimalPricing}\xspace}
\newcommand{\taboo}[4]{ {}_{\displaystyle_{ #1}}{\!\!\!\lambda_{{#2}{#3}}^{#4}}}
\newcommand{\fdp}[1]{\textcolor{blue}{#1}}
\newcommand{\1}[1]{\mathbbm{1}\!\!\left \{#1\right \}}



\begin{abstract}
The widespread adoption of  5G cellular technology  will evolve as one of the major drivers for the growth of IoT-based applications. In this  paper, we consider a  Service Provider (SP) that launches a smart city service based on IoT data readings: in order to serve IoT data collected across different locations, the SP dynamically negotiates and rescales bandwidth and service functions. 5G network slicing functions are key to lease appropriate amount of resources over heterogeneous access technologies and different site types. Also, different infrastructure providers will charge slicing service depending on specific access technology supported across  sites and IoT data collection patterns. 

We introduce a pricing mechanism based on Age of Information (AoI)  to reduce the cost of SPs. It provides incentives for devices to smooth traffic by shifting part of the traffic load from highly congested and more expensive locations to lesser charged ones, while meeting QoS requirements of the IoT service. The proposed optimal pricing scheme comprises a two-stage decision process, where the SP determines the pricing of each location and devices schedule uploads of collected data based on the optimal uploading policy.  
Simulations show that the SP attains consistent cost reductions tuning the trade-off between slicing costs and the AoI of uploaded IoT data.    
\end{abstract}

\begin{IEEEkeywords}
Age of Information, IoT, Markov Decision process, pricing mechanism, Simulated Annelaing.
\end{IEEEkeywords}	




\section{Introduction}

Data collection at scale represents the key signature of future IoT applications, posing significant challenges in the integration of emerging 5G networks and IoT technologies as identified in early studies \cite{surveyIoT}. In fact, pervasive object readings will play a decisive role in the context of smart cities for both process monitoring and management \cite{smartcity}. Using IoT, a whole new set of applications will be able to feed local information generated by both objects and mobile devices into their databases. Such information streams are consumed for management and prediction purposes by services such as city air management, smart waste management or traffic management, and demand-response schemes~\cite{zanella2014}. Data brokerage is thus emerging as one of the most interesting business opportunities: new service providers in 5G networks can seize the opportunity to mediate between companies purchasing IoT data and device owners. This is considered a cornerstone in creating a marketplace for IoT data \cite{IoTmarket2,IoTmarket3,IoTmarket4,IoTmarket5} which is essential for the uptake of smart city services. 

The architecture of IoT networks must be able to support local data streams collected from highly heterogeneous information sources, including e.g., meters for water and electricity management, outdoor and indoor positioning data, parking presence sensors, and a whole new set of user-generated contents related to mobile application-specific data. 
Indeed, the long-standing problem of integrated architectures and protocols to support IoT data collection appears finally solved by the uptake of 5G connectivity \cite{cisco2020}. Slicing techniques offered by 5G technology allow Infrastructure Providers (InP) to offer differentiated services to their customers using shared resource pools. A slice for IoT services, in this context, is a share of mobile network infrastructure obtained by forming a logical network on top of the physical one connecting IoT devices (Fig.\ref{heteroGeo}). 
More generally,  traffic differentiation in 5G systems can be obtained by isolating specific traffic categories within slices, which in turn can be dedicated to serving target verticals under specific service isolation guarantees \cite{ZhangCommMag}\cite{SamdanisMT}. %
Smart city services, where Service Providers (SP) support IoT data readings from mobile sensing devices are a key use case of \emph{slicing service}. In this context, the role of the SP is to lease resources (radio, processing, storage and  radio resources) in the form of one or more dedicated slices and from one or multiple InPs; the leased slice will support the connectivity of the fleet of devices taking part to the IoT sensing services at a cost for the upload of sensed data.  

The costs incurred by sensing services depend on a number of factors, including the business model and the ownership of the sensing devices. It is possible that sensing devices are owned by the SP whereas the sensing services are designed and run by third parties and offered, e.g.,  as a smartphone app. In this case, the sensing services involve payments to the SP~\cite{jayasumana2007virtual}. If the SP is in charge of the sensing services and also the slicing services, 
in turn, non-monetary costs --  similar to the shadow prices defined in \cite{Kelly1998} -- can be used effectively as a penalty to avoid hot-spot phenomena by deterring the upload of sensed data in congested areas. 

IoT sensing services relying on 5G technology pose their own challenges. The informative content of sensed data changes over time depending on the profile of the IoT sensing service. Information on traffic mobility, for instance, will retain its value on the timescale of the tenths of seconds, whereas temperature and pollution measurements will change in the timescale of the hours.  Clearly, managing IoT devices requires a mechanism to control information freshness, the latter being also referred to as the \emph{age of information} (AoI).  Such mechanism, known as aging control, determines when IoT readings should be uploaded to avoid stale information.

Controlling the AoI dynamics of data carried by IoT devices permits to trade-off between the value of IoT sensing readings -- indeed specific to a tagged service -- for the cost for uploading them using the 5G IoT slicing service. Motivated by the aging control problem intrinsic to IoT devices, and by the traffic offloading capabilities enabled by 5G technology, in this paper we investigate the following two questions: 
\begin{enumerate}
\item given the requirements of a tagged \emph{IoT sensing service} and the SP charging rates, what is the optimal upload strategy to control information freshness at the device level? 
\item how should the SP incentivize users to offload IoT data in order to reduce the costs to lease the resource slice?
\end{enumerate}
In the first part of the paper, we address the first question via the control of AoI at the device level, and we derive an optimal upload strategy. Sensing devices trigger the upload of sensed data depending on two factors: the application profile and the price for the IoT sensing service. It is the application profile to determine for how long sensed data retains their value, whereas location-dependent prices determine the unit cost of sensed data uploads performed using the IoT slice. The problem is formulated as a Markov decision process (MDP). The optimal stationary policy solving the problem has the multi-threshold structure: the upload of information occurs depending on the upload prices available to a tagged device, i.e., prices available in the cell it is connected to, and on the AoI relative to the data stored in the device memory. In the second part of the paper, the minimization of the slicing service costs is addressed: the SP optimizes the vector of prices that are exposed to devices with the aim to minimize the cost paid to the InP for leasing the slice while satisfying the applications'  delay target. 

{\em Prior art and main contribution.} The two control actions considered in this work are traffic offloading control and aging control. Traffic offloading is a standard networking technique to perform load balancing and avoid traffic congestion. However, in 5G networks, it must work on a per slice basis, and must be made available to SPs in a transparent fashion with respect to InP traffic management tools. Aging control on the other hand is a key requirement for sensing applications in IoT systems. These two problems have been addressed separately ~\cite{altman2019forever, liu2020joint, song2019age,Raja2020,Kortoci2019}, but no prior work has considered the two problems at once to the best of the authors' knowledge. This paper aims to connect these two research lines within the same control framework, resulting in a scheme for the cost-efficient brokerage of IoT data using 5G slicing. While IoT data offloading techniques have been proposed in the context of vehicular networks \cite{Raja2020} or sensor networks \cite{Kortoci2019}, the proposed solution is tailored specifically to the case of  5G slicing since the SP can stimulate IoT data offload towards less congested areas using a distributed and location-aware scheme which operates at the sensing application level. Furthermore,  by means of flexible pricing control, we minimize the cost incurred by the SP in order to lease slice resources from InPs. Finally, the proposed framework includes inherently a notion of service level agreement (SLA) since it is rooted in the concept of AoI which captures latency requirements of IoT data readings, as agreed by the SP with his customers. 

{\em Structure of the paper. } The remainder of this work is organized as follows. First we introduce the considered system in Sec.~\ref{sec2}. Then, we discuss the two main control actions, namely, traffic offloading control and aging control in Sec.~\ref{sec3} and  in Sec.~\ref{sec4}, respectively.  Sec.~\ref{sec5} bridges the two pillars in a unified framework. We propose algorithms to solve the joint offloading and AoI control in Sec.~\ref{sec:algorithms}. We report numerical results in Sec.~\ref{sec6} and we revise related works in Sec.~\ref{sec7}. A closing section ends the paper.


\section{System description}\label{sec2}	

\renewcommand{\thefootnote}{\fnsymbol{footnote}}    
	\begin{figure}[t!]
	\centering 
	\includegraphics[width=0.45\textwidth]{./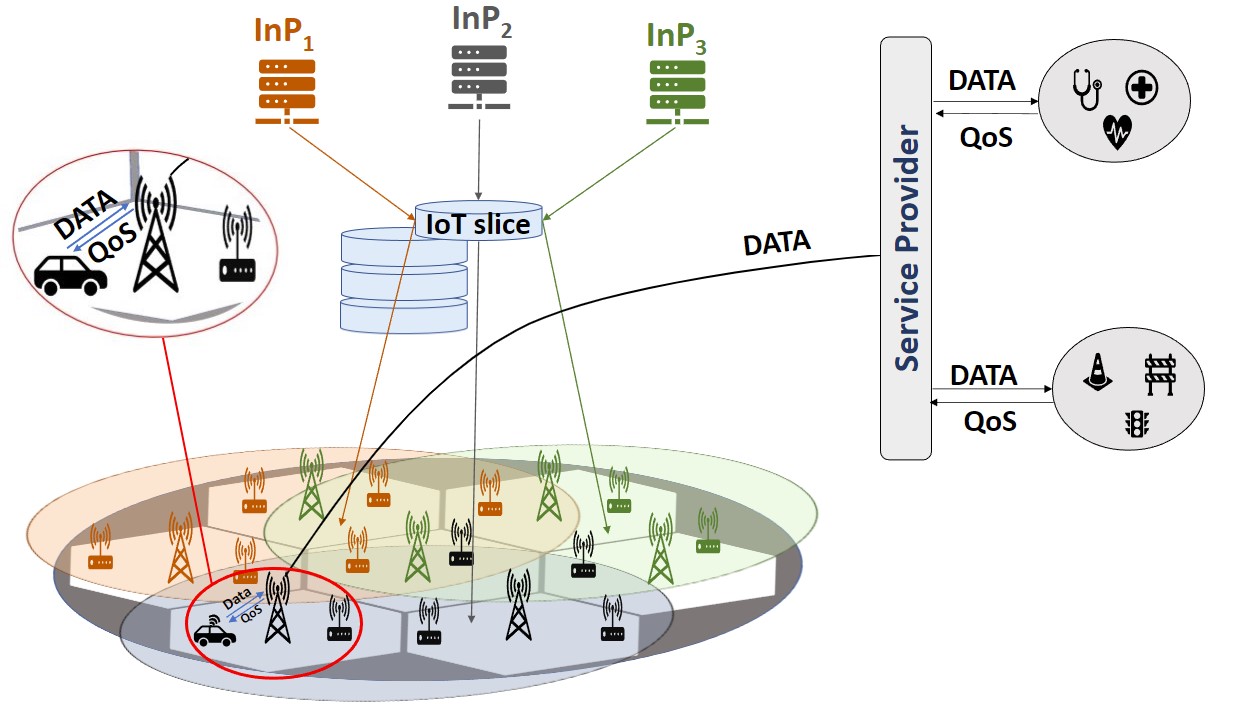}
	\caption{Cutting slices of resources across multiple locations and across multiple InPs.}
	\label{heteroGeo}
    \end{figure}  
  
A {\em Service Provider (SP)} offers Internet connectivity to heterogeneous IoT devices over a physical region (Fig.\ref{heteroGeo}). The SP can act as a data broker, i.e., it collects data from device owners or from mobile IoT devices deployed across the region and sell the data to interested parties under Data as a Service (DaaS) scheme or use the collected data to run his own service. To that aim,  a single SP can aggregate resources leased from various available InPs at different locations. Each InP provides dedicated 5G slices for IoT data collection at certain cost. In practice, sensed data is relayed using a fleet of mobile devices uploading them at the need while mobile relays are served through resources across a pool of base stations covered by the selected InPs infrastructures. 

Because sensed data belongs to a variety of categories, e.g., healthcare data, environmental monitoring data, road traffic data, etc., it has different time sensitivity. SP customers will require brokered IoT data to comply with certain QoS requirements. Throughout this work, the latency of delivered IoT data is the reference SLA metric (indeed it is a fundamental parameter for, e.g., industrial automation, intelligent transport systems, and healthcare monitoring applications). Latency, in turn, is impacted by the locations from which mobile devices upload sensed data. Note that aggregated traffic may vary significantly across regional locations, e.g., due to the presence of hotspots. Ultimately, the SPs need to grant target QoS figures for a given IoT application and obey standardized SLA. To this aim, the key enabler is 5G network slicing by which the SP negotiates and adjusts the scale of bandwidth and service functions. In practice, this entails orchestrating  slicing functionality across heterogeneous access technologies (5G, LTE, 3G, and WI-FI), over different site types (macro, micro, and pico base stations) and over multiple InPs. The cost of leased infrastructures depends on chosen InPs, specific access technology supported across regional sites, and IoT data collection patterns. For the sake of clarity, we shall refer to bandwidth costs only, but the whole framework may well include also costs for local computation and/or storage \cite{3GPPPSlices}.

In order to comply with SLA agreements for IoT data collection, the SP dynamically determines the resources per slice required to match the current demand. Due to scarcity of resources, higher costs will be incurred in crowded and congested locations. Hence, the SP designs an {\em IoT data collection policy} by which freshness of IoT data is traded off  against costs. In fact, the upload of non-critical data can be deferred to occur at a location with smaller costs yet complying with SLAs, i.e., target latency figures.  
 
The key mechanism detailed in the next section is a price-based load balancing scheme where the SP incentivizes users not to upload data from congested locations. Prices are dynamically set, e.g., based on the congestion levels. Different locations are tagged by a price to upload a unit of IoT data. The whole scheme takes advantage of user mobility: while IoT devices are carried by users appliances and move across the regions, data upload can be diverted towards less congested locations. Through pricing, the SP can shift the IoT traffic generated by mobile IoT devices from locations where data are sensed to less crowded ones, where leased slice resources are relatively cheaper. Following example provides the intuition behind the traffic offloading mechanism.

    

\begin{figure*}[t]
\centering
\includegraphics[width=0.45\textwidth]{./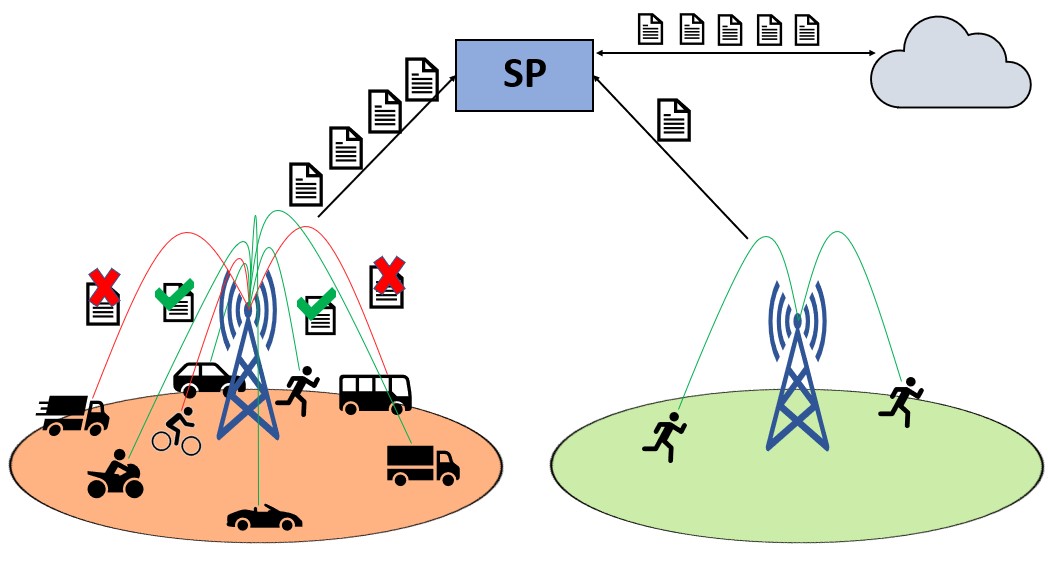} 
	\includegraphics[width=0.45\textwidth]{./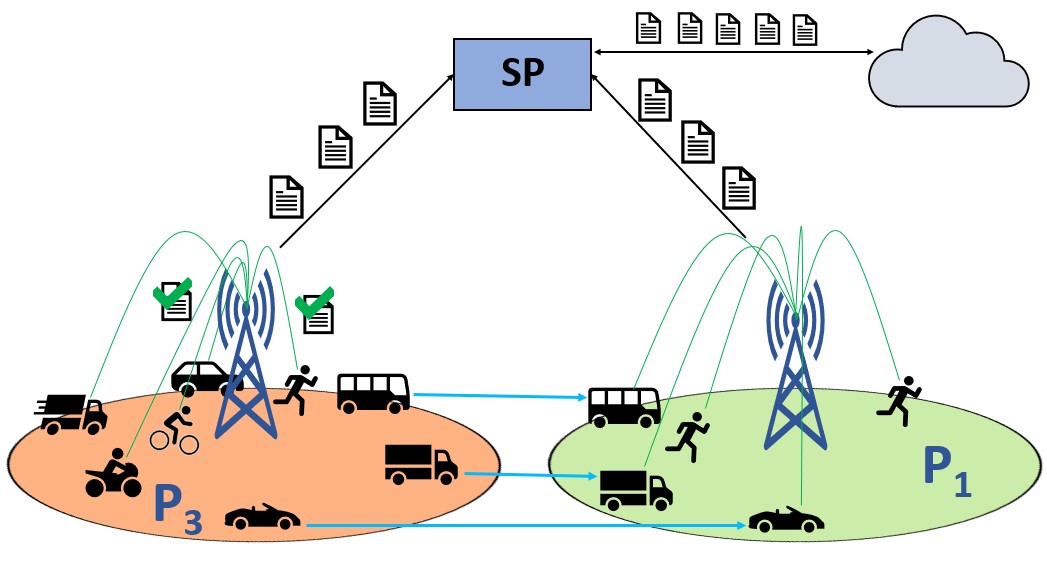}
	\put(-435,0){location $i$}
	\put(-305,0){location $j$}
	\put(-192,0){location $i$}
	\put(-70,0){location $j$}
	\caption{a) Without traffic offloading mechanism b) With traffic offloading mechanism. }
	\label{Tmigration}
\end{figure*}

Fig.~\ref{Tmigration}(a) shows a scenario that has no control mechanism for traffic upload.  As a result, up-link traffic is high in location $i$ and SP may lose part of the data due to outage.  In addition,  upload costs would be higher as well.  In Fig.~\ref{Tmigration}(b), in contrast, upload traffic is well distributed.  A traffic offloading mechanism is adopted, and there is no overburden on any particular access point.

In the following sections, we introduce the mechanism that combines aging control and traffic offloading.  This mechanism, deployed at each device, is designed based on the aggregated mobility of the devices.  Nonetheless, if the individual mobility of devices is available, the mechanism can also leverage this information, noting that the objective of the device is to find an optimal  strategy to upload the collected data.   If decisions are made per device, using individual  mobility patterns, heterogeneity across devices  does not impact the upload decisions of each device.

  
\section{ Traffic Offloading}\label{sec:offload}\label{sec3}


In this section, we formulate the SP \shadow pricing scheme which is used to minimize the total cost incurred by the SP. The resulting control problem accounts for the mobility pattern of devices and the delay requirements of the IoT data collection service. At each location, the SP  selects the corresponding InPs.  Let ${\L}=\{1, 2,..,L\}$
be the set of locations.  We denote by  $B_i$ the maximum bandwidth at location $i$, $1 \leq i \leq L$, resulting in a maximum bandwidth vector  $\bm{B}$. Each location is tagged with a unit \shadow price, resulting in a \shadow price vector ${\bm p}=(p(i), i\in{\L})\in\mathbb{R}^L$. 
The price vector induces a set of $K$ different prices denoted by   ${\mathcal{P}}_K$, $\mathcal{P}_K=\{P_1,..,P_K\}$, where $P_1< \ldots <P_K$. Prices  impact  location-dependent upload policies which determine when a mobile IoT device should upload sensed data, based on the current age of information. The age of information represents the time elapsed from sensor reading until upload. Let $\Delta_i(p)$ be the random variable representing the  age of information based on the price vector -- at upload time --  for data collected at location $i$. Let $C_i$ be the monetary unit cost to lease bandwidth at location $i$. Let $D_i$ be the amount of data generated by devices at location $i$ during a time slot. Finally, the upload control is represented by variable $Y_{ij}({\bm p})$,  which is the average traffic rate for data collected at location $i$ and uploaded at location $j$. Notations are summarized in Table~\ref{tab:notation}.
  
\label{sec:P1}	 Next, we pose the  optimization problem faced by the SP, named here as\\
\textsc{Traffic Offloading: } 
\begin{align}
 &\underset{{\bm{p}}}{\text{minimize}}  \sum_{i\in {\Loc}} \sum_{j\in {\Loc}}  C_j \, Y_{ij}({\bm p}) \nonumber \\
& \text{subject to}\nonumber\\
&   \hskip 5mm  \sum_{j\in {\Loc}}  Y_{ji}({\bm p}) \leq B_i  ,\; \forall i \in {\Loc} \label{eq:c1}\\
&   \hskip 5mm  \sum_{j\in {\Loc}}  Y_{ij}({\bm p}) = D_i,\; \forall i \in {\Loc}\label{eq:c2}\\
&   \hskip 5mm  \mathbb{P} ( \Delta_i({\bm p}) > d) \leq \epsilon,\; \forall i \in {\Loc}   \label{eq:c3}\\
&  \hskip 5mm  Y_{ij} \geq 0,\;  \forall i, \forall j \in {\Loc}
\end{align}
In this model, \eqref{eq:c1} is the per location constraint on the available bandwidth for the IoT slice, and \eqref{eq:c2} is a flow conservation constraint. 
Constraint \eqref{eq:c3} provides a tunable SLA constraint on the age of information collected at specific location $i \in {\Loc}$, depending on a target latency value $ d> 0$ and on tolerance $\epsilon>0$. 
%
%

The main challenge  to solve the \textsc{Traffic Offloading} problem is to account for the mobility pattern of devices. In fact, they collect data at some tagged location, and they upload it according to the chosen policy, in order to meet QoS requirements. In practice, once a sensing device is associated with a tagged location, it will be informed of a \shadow price available for the IoT slicing service, so that the decision to upload or not can be implemented onboard of sensing devices in a fully distributed fashion. In Section \ref{sec5} we shall provide an algorithm able to determine the optimal price vector ${\bm p}$ solving the traffic offloading problem. Before that, in the next section, we determine the optimal upload control at the device for a given price vector ${\bm p}$. 

AoI is key for IoT data, as it represents how ``fresh'' is information, e.g., given the rate at which new samples are produced by a given  sensor~\cite{AOI1,AOI2,AOI3}  AoI is the time difference between the time of generation and the time of successful delivery of data to destination.  In general different services have different QoS requirements: services using data collected from various IoT devices have thus differnt tolerance with respect to the age of information. For example services which provide traffic status updates may need AoI in the order of seconds. Conversely, traffic status predictors may tolerate higher values of AoI.  In essence, AoI is a new performance metric able to complement traditional throughput/delay-based performance evaluation.

Throughout this work, the control of AoI is corresponds to the upload control with the upload control. More explicitly, the reward that the MSP achieves by uploading data depends on the value of the information at upload time.  The case for a utility-based model is that it is flexible enough to allow the MSP to run different services by simply changing the utility function and pricing scheme. As an example, since the value of the information for real time updates decays very fast, it could be approximated as an exponential function. Different AoI utility functions could possibly used to match different QoS requirements \cite{AOI4}.


\begin{table}[t]\caption{Table of notation}
\centering
\begin{tabular}{p{0.18\columnwidth}|p{0.7\columnwidth}}
\hline
{\it Notation} & {\it Description}\\
\hline
\multicolumn{2}{c}{Basic parameters} \\
\hline
${\Loc}$ & set of regional locations; ${\Loc}=\{1,\ldots,L\}$\\
$\mathcal P$  & set of upload unit prices; $\mathcal P=\{p_1,\ldots,p_K\}$\\
$\mathcal M$  & set of information age values; $\mathcal M=\{1,\ldots,M\}$\\
$D_i$& amount of data generated at location $i$ during a time slot\\
$\Pi_{ij}$& transition probability from location $i$ to  $j$ \\
$\taboo{A\;\;}{i}{j}{n}$& probability of moving from location $i$ to   $j$ in $n$ steps without entering   taboo set $A$\\
$\pi_{i}$& occupation probability of location $i$ \\
$B_i$& maximum bandwidth at location $i$\\
$\textbf{B}$& Maximum bandwidth vector \\
$C_i$& monetary unit cost to lease bandwidth at location $i$\\
$d$& target latency\\
$\epsilon_i$& tolerance factor at location  $i$\\
$N$& number of IoT devices\\
$F$& average size of the collected data\\
$\Tseconds$&  timeslot duration (seconds) \\
\hline
\multicolumn{2}{c}{States, actions, transitions and rewards }\\
\hline
$t$ & current timeslot \\
$x_t$& age of information at time $t$  \\
$l_t$& location at time $t$ \\
$s_{t}$& state at time $t$; $s_{t}=(x_t,l_t)$\\
$a_{t}$& action at time $t$, where 1 means upload, and 0 defer \\
$\mu(x,l)$& function expressing the probability that the device performs action $a = 1$ in state~$s=(x,l)$\\
$\Gamma_{s,a,s'}$& transition probability from $s$ to $s'$ under action~$a$\\
$r_t(s_t, a_t)$& instantaneous reward under state action pair $(s_t,a_t)$ at time~$t$\\
\hline
\multicolumn{2}{c}{Variables}\\
\hline
$\Delta_i({\bm p})$& random variable characterizing     age of information at upload time for data collected at location $i$\\ 
$F_{\Delta_i}$& probability distribution (CCDF) for the age of information at upload time for data collected at location $i$\\
$Y_{ij}({\bm p})$& average traffic rate for data collected at location $i$ and uploaded at location~$j$\\
$y_{ij}$& 		   average traffic rate for data collected at location $i$ and uploaded at location~$j$, per device\\
$f(i,j,t; {\bm p})$& probability that a device collects data from location $i$ and upload it at time $t$ in location $j$\\
$\mathcal{U}$& set of prices corresponding to locations wherein the optimal policy is to upload\\
$K$ & number of threshold values in the current multi-threshold policy \\
$\thresholdrange{j}$& $j^{th}$ AoI  threshold value, $\thresholdrange{1}=0$, $\thresholdrange{j} \leq \thresholdrange{j+1}$,   and $\thresholdrange{K} \leq M$ (for convenience,  $\thresholdrange{K+1} = M$)  \\
$\threshold_l$&   AoI threshold for data collected at location~$l$\\
$\thresholdvector$&  AoI  threshold vector (one threshold per  location); $\thresholdvector = (\threshold_1,  \ldots, \threshold_L)$\\
$\thresholdvector_{max}$& AoI  threshold vector with all  values equal $\threshold_{max}$ (maximum achievable AoI)   \\
\hline
\end{tabular}\label{tab:notation}
\end{table}


\section{ Aging Control }\label{sec4}	




Each device decides to upload data or defer based on its actual location, the vector of \shadow prices, and the age of information stored in its buffer. Let $x_t$  be the age of information for data collected at time $t$ by a tagged device: $x_t=1$ when the device collects it, and increases by one at every time slot, except when the device uploads data or the collected data reaches the maximum age, denoted by $M$. Note that $M$ is a design parameter, assumed to be fixed and given.   
We let ${\mathcal{M}}=\{1,\ldots,M\}$ so that $x_t \in \mathcal{M}$. 
Let $U(x)$ be the utility corresponding to uploading data with age of information $x$, where $U(\cdot)$ is a non-increasing function. The selection of the utility function is up to the SP.  For example,  if the SP wants to collect data concerning traffic updates, the value of the information may decrease exponentially fast.  For pollution level updates, in contrast, the value may not decrease as fast.

The state of a tagged device at time $t$ is denoted $s_t=(x_t,l_t)$, where $x_t$ is the AoI  as described in the above paragraph and $l_t$ is the device's location at time $t$. Location $l_t\in {\mathcal L}$ is the state of a finite, discrete, ergodic Markov chain, whose dynamics determines the mobility pattern. We denote the transition probability between location $l$ and $k$ by $\lambda_{lk}$; $\Lambda=\{\lambda_{lk}\}$, is the corresponding transition probability matrix. Finally, let ${\bm \pi}=[\pi_1, \pi_2, .. \pi_L]$ be the steady state probability distribution. 

The action set available at each device is to upload or defer, i.e., $A=\{0,1\}$, where $0$ means ``defer'' and $1$ ``upload'';  the action taken at time $t$ is denoted by $a_t$. Hence the dynamics of the age of information  at a tagged device is   given by
\begin{equation}
x_{t+1}= 
\begin{cases} 
\;1,& \quad \mbox{if} \; a_t = 1, \\
\;\min(x_t+1, M), & \quad \mbox{if} \; a_t = 0. \\
\end{cases}\nonumber
\end{equation}
Next, we characterize the transition probability of the resulting MDP. Let $s=(x, l)$ be the current state of the device and let  $s'=(x', l')$ be its next state under action $a$. The transition probability from $s$ to $s'$, under action $a$, is given by 
\begin{equation}
\Gamma_{s , a, s'}=  
\begin{cases}
\lambda_{l,l'}, & \mbox{ if }  x'=\min(x+1,M)  \mbox{ and } a=0 \\
              & \mbox{ or } x'=1  \mbox{ and } a=1,\\
0, & \mbox{ otherwise.}  
\end{cases}
\end{equation}

{\bf Instantaneous reward}. The instantaneous reward under the state action pair $(s_t,a_t)$ at  time $t$, $r_t(s_t,a_t)$, is  
\begin{equation}
r_t(s_t,a_t) =  U(x_t) -  p(l_t) \cdot a_t.
\end{equation}


{\bf Upload policy}. The upload policy $\mu$ for a tagged device is a probability distribution over the action space. In the rest of the discussion, we restrict to stationary policies; since our action space is a binary set, a policy simplifies into function $\mu = \mu(s)$ expressing the probability the device performs action $a=1$ in state $s$.

{\bf Problem statement:} The objective of each device is to maximize the expected average reward: 
\begin{align}\label{eq:avcost}
\textsc{Aging control: } & \max_\mu \E[r,\mu] \\
&     \E[r,\mu] = \lim_{\eta \to \infty}\frac{1}{\eta}\sum_{t=0}^{\eta-1}\E[r_t(x_t,l_t,a_t);\mu] \nonumber
\end{align}
Note that, if $P_1=0$,  for any optimal strategy, the devices will upload  immediately their collected data  at locations with the price $P_1=0$.   Alternatively,  if $P_1>0$,  the  instantaneous reward can be expressed as 
\begin{equation}
r_t(s_t,a_t) =  U(x_t) - (p(l_t)-P_1)  \cdot a_t -P_1 a_t \label{eq:rewr}
\end{equation}
Therefore, the value $P_1$ can  be interpreted as the energy cost of each uploaded message.  In the remainder of this paper, except otherwise noted, and without loss of generality, we assume $P_1 =0$.

In what follows we characterize the optimal control policy that solves \eqref{eq:avcost}. We begin by introducing a special type of strategy, referred to as a {\em multi-threshold strategy}. 

\begin{definition}[Multi-threshold strategy]
A multi-threshold strategy is such that there exists $K$ and threshold values $\thresholdrange{j}$, $j=0,\ldots,K-1$ such that $\thresholdrange{1} \leq \thresholdrange{2} \leq \ldots \leq \thresholdrange{K} \leq M$ and  
\[
\mu(x, l) = 
\begin{cases}
1   & \mbox{if}\quad  x \ge \thresholdrange{j} \mbox{ and } p(l) \leq P_j \\
0   & \mbox{otherwise}
\end{cases}
\]
Note that $K$ is the number of thresholds, and $ \thresholdrange{1}$ and $\thresholdrange{K}$ are the minimum and maximum threshold values,
 $\thresholdrange{1}=0$ and  $\thresholdrange{K} \leq M$.     
\end{definition}
A device using this multi-threshold strategy uploads the collected data at its current location $l$ at price $p(l)=P_j$ if the age of  information  exceeds  $\thresholdrange{j-1}$. The following theorem reduces the problem of finding the optimal strategy  
for the \textsc{Aging control} problem to the one of
finding the $K$  thresholds  $\thresholdrange{j}$, $j=1, \ldots, K$.
\begin{theorem}
\label{THRESHOLD}
The optimization problem \eqref{eq:avcost} admits a unique deterministic optimal multi-threshold strategy.
\end{theorem}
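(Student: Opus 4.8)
The plan is to cast the \textsc{Aging control} problem as an average-reward MDP on the finite state space $\mathcal{M}\times\mathcal{L}$ and invoke the standard theory: since the state space and action space are finite and the underlying mobility chain is ergodic (and the age component is trivially reachable), the average-reward optimality equation admits a solution, and there exists a stationary deterministic optimal policy obtained by maximizing the right-hand side of the Bellman equation. So the real content is structural: showing that a maximizing action in each state has the claimed monotone ``double-threshold'' form (monotone in the age $x$ for fixed location, and monotone in the price $p(l)$ for fixed age), and that the optimal policy is unique.

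First I would write the average-reward Bellman (optimality) equation $g + h(x,l) = \max_{a\in\{0,1\}}\big\{ U(x) - p(l)\,a + \sum_{l'}\lambda_{l,l'}\,h(\varphi_a(x),l')\big\}$, where $\varphi_1(x)=1$ and $\varphi_0(x)=\min(x+1,M)$, and $g$ is the optimal gain. The key observation is that the ``upload'' action resets the age to $1$ regardless of the current state, so the value of uploading from state $(x,l)$ is $U(x)-p(l)+\bar h(1)$ where $\bar h(y):=\sum_{l'}\lambda_{l,l'}h(y,l')$ — i.e. its dependence on $x$ is only through $U(x)$, and its dependence on $l$ is only through $-p(l)$ (plus the location-independent term $\bar h(1)$, modulo the slight abuse that $\bar h$ also depends on $l$ through $\lambda_{l,\cdot}$; I would handle this by working with the value-iteration operator and a relative value function, or equivalently by noting the comparison below is between actions in the \emph{same} state so the $\sum_{l'}\lambda_{l,l'}h(1,l')$ term is common). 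Define the advantage $\delta(x,l) := \big[U(x)-p(l)+\sum_{l'}\lambda_{l,l'}h(1,l')\big] - \big[U(x)+\sum_{l'}\lambda_{l,l'}h(\min(x+1,M),l')\big] = -p(l) + \sum_{l'}\lambda_{l,l'}\big(h(1,l')-h(\min(x+1,M),l')\big)$; uploading is optimal iff $\delta(x,l)\ge 0$ (break ties toward $\varphi_0$, or toward upload — I return to this for uniqueness).

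The structural heart is then: (i) $\delta(x,l)$ is nonincreasing in $p(l)$ — immediate, since $p(l)$ enters linearly with a negative sign and nothing else in $\delta$ depends on $l$; and (ii) for fixed location, $\delta(x,l)$ is nondecreasing in $x$, which reduces to showing $x\mapsto \sum_{l'}\lambda_{l,l'}h(\min(x+1,M),l')$ is nonincreasing, i.e. that the relative value function $h(\cdot,l')$ is nonincreasing in the age. This last monotonicity I would establish by induction through value iteration: starting from $h_0\equiv 0$ (nonincreasing in $x$), assume $h_n(\cdot,l')$ nonincreasing for all $l'$; then in the one-step operator the ``upload'' term $U(x)-p(l)+\sum\lambda h_n(1,\cdot)$ is nonincreasing in $x$ (as $U$ is), the ``defer'' term $U(x)+\sum\lambda h_n(\min(x+1,M),\cdot)$ is nonincreasing in $x$ (as $U$ and, by the induction hypothesis, $h_n(\min(x+1,M),\cdot)$ are), hence their max $h_{n+1}(x,l)$ is nonincreasing in $x$; passing to the limit gives the claim. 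Combining (i)–(ii): for each location $l$ the optimal action is $1$ precisely when $x$ is at least some threshold depending on $l$, and this threshold is nonincreasing in $p(l)$; grouping locations by their price level $P_j$ yields the thresholds $\thresholdrange{j}$ with $\thresholdrange{1}\le\cdots\le\thresholdrange{K}$ (with $\thresholdrange{1}=0$ since at price $P_1=0$ uploading is always weakly optimal — indeed strictly, once ties are broken, because $U$ nonincreasing makes fresh data worth at least as much and the cost is zero). For uniqueness of the \emph{deterministic} optimal multi-threshold strategy, I would argue that the set of states where $\delta(x,l)=0$ has no effect on the gain $g$ (both actions are optimal there and lead to the same value), so there is a canonical choice — e.g. upload whenever $\delta\ge 0$ — and under a mild nondegeneracy of the utility/price values the thresholds are pinned down; if exact ties $\delta(x,l)=0$ can occur on a nontrivial set one states uniqueness with the stated tie-breaking convention.

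The main obstacle I anticipate is the age-monotonicity of $h$ in combination with the $\min(x+1,M)$ clipping at the boundary $x=M$: one must check the induction step is not broken at $x=M$ (it is not, since $\min(M+1,M)=M$ makes the defer term at $x=M$ equal $U(M)+\sum\lambda h_n(M,\cdot)$, which is $\le$ its value at $x=M-1$ by the induction hypothesis and $U$ nonincreasing), and that the average-reward value iteration actually converges / the relative value function is well defined — for which I would either appeal to a unichain or weakly-communicating argument for the joint chain (the mobility chain is ergodic and the age coordinate is driven deterministically, so every state communicates under the always-defer policy), or sidestep it entirely by first solving the discounted problem, proving the threshold structure there with the same induction, and taking the vanishing-discount limit to recover the average-reward case.
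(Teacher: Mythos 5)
Your overall route coincides with the paper's: existence of a deterministic stationary optimal policy and of a solution $(\rho,V)$ of the average-reward Bellman equation is taken from standard MDP theory, the upload/defer comparison is encoded in an advantage function (your $\delta(x,l)$ is the paper's $\Delta H(x,l)$), and monotonicity of the (relative) value function in the age is obtained by induction; your value-iteration induction, including the check at the clipped boundary $x=M$ and the vanishing-discount fallback, is in fact somewhat more careful than the paper's backward induction in Lemma~\ref{lem:monotone}. Your handling of uniqueness via an explicit tie-breaking convention is also reasonable (the paper's proof is silent on this point).

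There is, however, a genuine gap in your step (i), the monotonicity of $\delta$ in the price. You assert it is ``immediate, since $p(l)$ enters linearly and nothing else in $\delta$ depends on $l$.'' That is false in the model of the paper: $\delta(x,l)=-p(l)+\sum_{l'}\lambda_{l,l'}\bigl(h(1,l')-h(\min(x+1,M),l')\bigr)$, and the continuation term depends on $l$ through the transition row $\lambda_{l,\cdot}$; two locations with different prices generally have different rows. Your parenthetical fix (``the comparison is between actions in the same state'') only legitimizes the definition of $\delta$ at a fixed state $(x,l)$; it does nothing for the comparison of $\delta(x,\cdot)$ \emph{across} locations, which is precisely what is needed to conclude that the upload region depends on the location only through its price and grows with the price level, i.e.\ that the per-location thresholds collapse into ordered thresholds $\thresholdrange{1}\le\cdots\le\thresholdrange{K}$ as in the definition of a multi-threshold strategy. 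Your induction in $x$ alone delivers only a per-location threshold policy, not the multi-threshold structure claimed in Theorem~\ref{THRESHOLD}. This cross-location step is exactly the content of item~(iii) of Lemma~\ref{lem:monotone} in the paper, which compares $\Delta H(x,l-1)$ and $\Delta H(x,l)$ by exploiting $p(l)\ge p(l-1)$ together with the nonnegativity of $V(1,l')-V(x+1,l')$ and the fact that the two transition rows sum to one; your proposal needs an argument of that type (or an explicit restriction, e.g.\ location-independent transition rows, under which your ``immediate'' claim would actually be valid).
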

The proof of the above theorem is available in Appendix.~\ref{ap:prooftheor1}.

 \begin{figure}[t!]
	\centering 
	\includegraphics[width=0.45\textwidth]{./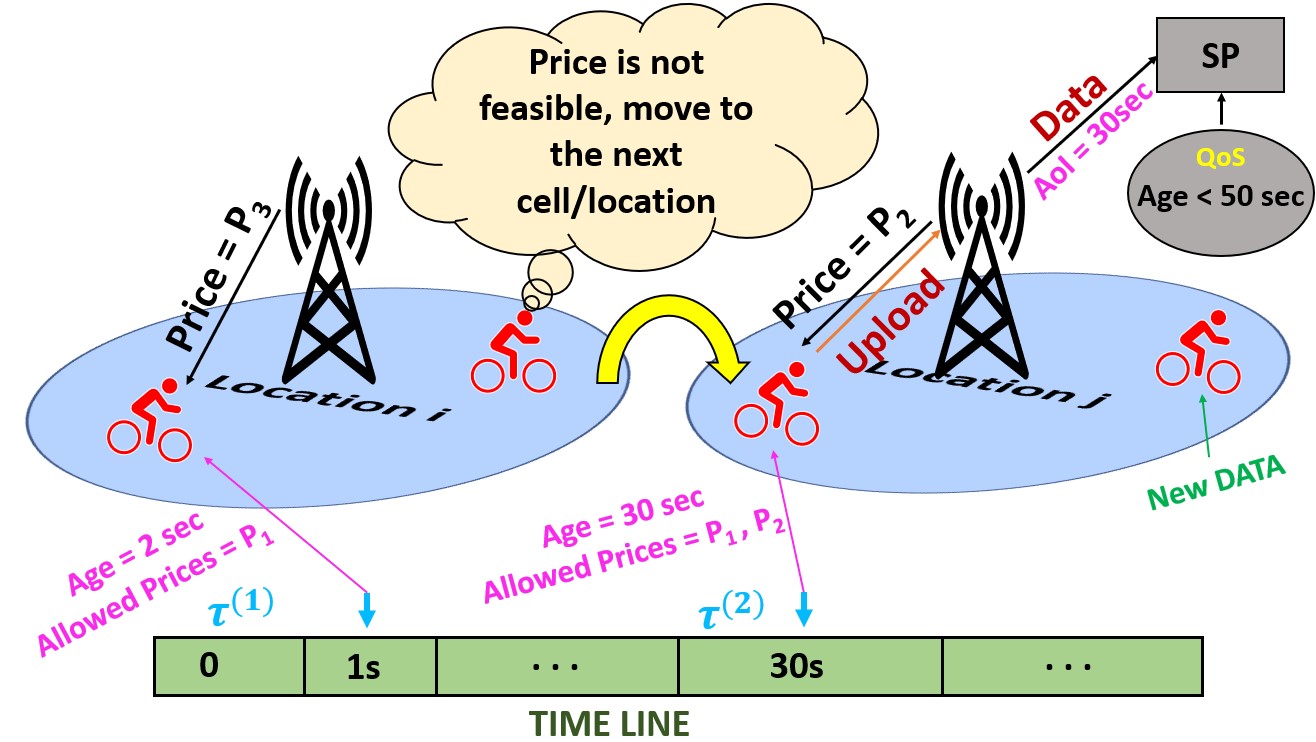}
	\caption{The upload mechanism: depending on the AoI, the upload decision is taken based on the shadow price value at the current location.} 
	\label{upload}
    \end{figure}
\begin{figure}[t]
	\centering 
	\includegraphics[scale=0.28]{./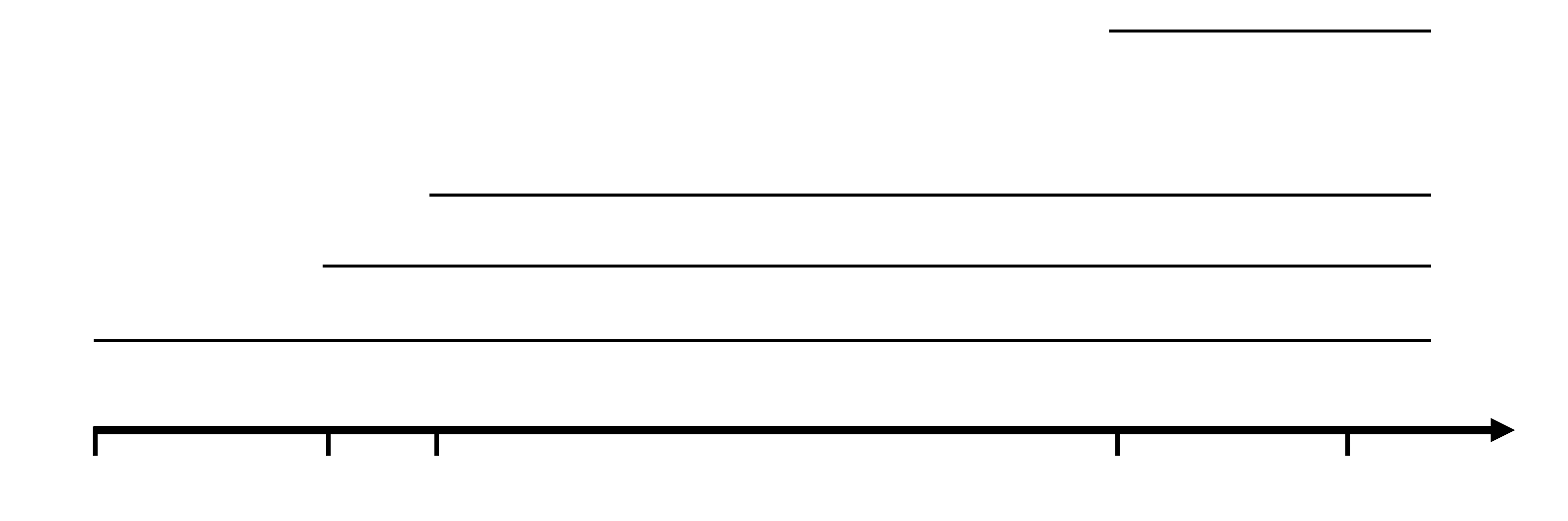}
	\put(-244,3){\put(0,0){$0$}\put(34,0){$\thresholdrange{2}$}\put(54,0){$\thresholdrange{3}$}\put(100,0){$\ldots$}\put(165,0){$\thresholdrange{{K}}$}\put(202,0){$M$}\put(224,5){$x$}}
	\put(-246,33){\put(0,0){$P_1$}\put(36,13){$P_2$}\put(56,24){$P_3$}\put(160,28){$\vdots$}\put(165,52){$P_{_{K}}$}}
	\caption{Structure of the multi-threshold policy; at the increase of the age of information upload action is optimal for an increasingly larger set of prices.} 
	\label{fig:multithr}
    \end{figure}

We now characterize some further properties of the optimal thresholds.  A qualitative description of the behavior of the optimal policy is depicted in Fig.~\ref{fig:multithr}. We observe that a multi-threshold strategy is a  simple procedure  to implement the distributed IoT upload control. In practice, when data is stored on a device, the AoI is one. Thus the device at the beginning will start by uploading only at locations where the price is $P_1=0$. Whenever AoI reaches $\thresholdrange{2}$, i.e.,  $x \ge \thresholdrange{2}$, the device switches to a second phase wherein an upload occurs if prices are less than or equal to $P_2$, that is either in locations with corresponding prices $P_1$ or  $P_2$. Similarly, once a new threshold is reached, say $\thresholdrange{j}$, the device will   upload the collected data at  locations with a price  less than or equal to $P_{j}$.

\textbf{Illustrative example. } Figure \ref{upload} displays a simple illustration of the multi-threshold policy, wherein a device attached to a bicycle enters a location $i$ where the price is $P_3$ and the age of the information is such that it can only upload if  price is $P_1$. After displacement,  the device enters a new location $j$ with a tagged price of $P_2$. By this time, age is higher than threshold $\thresholdrange{2}$,  allowing the device to upload information with either $P_1$ or $P_2$. Now, the device can upload the information as the price is acceptable.

As an immediate consequence of the proof of the previous theorem, we obtain the following corollary.   
\begin{corollary}
At any location $l$, $\mu(x,l)=1$ if $x\geq \thresholdrange{K}$. 
\end{corollary}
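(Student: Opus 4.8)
The plan is to derive the corollary directly from the multi-threshold characterization established in Theorem~\ref{THRESHOLD}, so that the argument is essentially a one-line specialization plus some bookkeeping. First I would invoke Theorem~\ref{THRESHOLD} to get that the optimal policy $\mu$ is a multi-threshold strategy, i.e.\ that there exist thresholds $0=\thresholdrange{1}\leq\thresholdrange{2}\leq\cdots\leq\thresholdrange{K}\leq M$ such that $\mu(x,l)=1$ precisely when some index $j\in\{1,\ldots,K\}$ satisfies $x\geq\thresholdrange{j}$ and $p(l)\leq P_j$.

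Next I would observe that every location's price $p(l)$ belongs to the induced price set $\mathcal{P}_K=\{P_1,\ldots,P_K\}$, and since $P_1<\cdots<P_K$ we have $p(l)\leq P_K$ for every $l$; in other words the price condition is automatically met for the index $j=K$. Then, for any $x\geq\thresholdrange{K}$ and any $l$, choosing $j=K$ makes both conditions, $x\geq\thresholdrange{K}$ and $p(l)\leq P_K$, hold, so $\mu(x,l)=1$, which is exactly the claim.

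As an alternative route -- the one suggested by the phrasing ``consequence of the proof'' -- I would point to the two ingredients that the proof of Theorem~\ref{THRESHOLD} supplies: (i) at a fixed location, the net advantage of uploading over deferring is monotone non-decreasing in the age $x$ (because older data is less valuable, so the relative value function is non-increasing in $x$); and (ii) at the maximal age $x=M$ uploading is optimal at every location, since deferring freezes the age at $M$ and cannot raise the utility. Ingredient (ii) is precisely the statement that the largest threshold $\thresholdrange{K}$, i.e.\ the one tied to the top price $P_K=\max_l p(l)$, satisfies $\thresholdrange{K}\leq M$, and ingredient (i) then propagates the upload decision to every age $x\geq\thresholdrange{K}$.

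I do not expect a genuine obstacle here: the only point requiring care is that $P_K$ is the largest price in the system, so that the constraint $p(l)\leq P_K$ in the multi-threshold rule is vacuous for every location; once that is noted, the corollary drops out of Theorem~\ref{THRESHOLD}. If one instead wanted a self-contained proof that does not cite the multi-threshold form, the only nontrivial piece would be re-deriving (ii) from the average-reward Bellman equations, which in turn relies on the monotonicity of the relative value function in the AoI.
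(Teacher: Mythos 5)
Your first argument is correct and is essentially the paper's: the paper gives no standalone proof, treating the corollary as immediate from the proof of Theorem~\ref{THRESHOLD}, where each location $l$ gets a threshold $\thresholdrange{l}$ and these are ordered so that $\thresholdrange{K}$ is the largest; hence for $x\geq\thresholdrange{K}$ the local threshold is exceeded at every location (equivalently, in your phrasing, $p(l)\leq P_K$ holds vacuously and the pair of conditions in the multi-threshold rule is met with $j=K$), so $\mu(x,l)=1$.

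One caution about your ``alternative route'': ingredient (ii) --- that at $x=M$ uploading is optimal at every location because deferring ``freezes the age and cannot raise the utility'' --- is not valid in general. Deferring also saves the price $p(l)$: at age $M$, uploading at $l$ is optimal only if $\sum_{l'}\lambda_{ll'}\bigl(V(1,l')-V(M,l')\bigr)\geq p(l)$, and this fails at sufficiently expensive locations; indeed Theorem~\ref{PR-THRE} and Corollary~\ref{c2} of the paper characterize exactly such regimes, in which $\mathcal{U}$ excludes the high prices and the device never uploads there, at any age. So a self-contained derivation along those lines would not go through as sketched; the corollary should be read, as in your first argument, as a formal consequence of the multi-threshold structure and the monotone ordering of the per-location thresholds supplied by the proof of Theorem~\ref{THRESHOLD}, rather than of any claim that age $M$ forces an upload everywhere.
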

The above corollary  implies that  the maximum age that can be reached by a message is $\thresholdrange{K}$, where $\thresholdrange{K} \le M$.

In general, the set of locations where a device is allowed to upload data, as well as the age of information when the upload action is performed,  depends on the distribution of the prices  across the set of locations ${\Loc}$ used for the slice leased by the SP. Such distribution can be optimized to reduce the cost of infrastructure utilization and yet satisfy the QoS requirements of the IoT service. In the next section, we shall connect the dynamics of AoI, the structure of the multi-threshold strategy, and the distribution of the  prices. Before doing so, we shall further characterize additional properties of the multi-threshold policy. In particular, a key step is to characterize the number of prices that the optimal  threshold strategy uses with positive probability.   

Let ${\mathcal L}_i =\{l\in {\mathcal L} | p(l) \leq P_i\}$ and   $K_{lP_{i}}=\sum_{l'\in {\mathcal L}_i} \lambda_{ll'}$. In addition, 
\begin{equation}
\mathcal{S}(i)=\sum_{x=2}^{M} (U(x)-U(M)) (1-K_{lP_{i}}) + U(1)-U(M)
\end{equation}

and
\begin{equation}
\bar K_{lP_{i}} = K_{lP_{i}} - K_{lP_{i-1}}.
\end{equation}

\begin{theorem}\label{PR-THRE}
Let $\mathcal{U}$ be the set of prices corresponding to locations wherein  the optimal policy is to upload.   Then, 
%
\begin{itemize}
\item   $\mathcal{U} = \{P_1\}$,  if and only if 
\begin{equation}
\mathcal{S}(1) < p(l),\;\; \forall  l\in {\mathcal L}/{\mathcal L}_1
\label{p1}
\end{equation}
\item  $\{P_i \} \subseteq \mathcal{U}$,  with $P_i\not=0$,  if and only if 
\begin{align}
&U(1) - (\bar K_{lP_{i}} U(2)+ (1-\bar K_{lP_{i}}) U(M)) > P_i \label{ineq:cond1} \\
& \mathcal{S}(i)< p(l),  \quad \forall  l\in {\mathcal L}/{\mathcal L}_i,
\label{pi}
\end{align}
\item   $\{P_i, P_{i+1},..,P_{i+k}\}  \subseteq \mathcal{U}$,  if and only if condition~\eqref{ineq:cond1} is met and  
\begin{align}
&\mathcal{S}(i+k)< p(l), \quad \forall  l\in {\mathcal L}/{\mathcal L}_{i+k}.
\end{align}
\end{itemize}
\end{theorem}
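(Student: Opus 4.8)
The plan is to run the whole analysis through the average-reward dynamic-programming (Bellman) optimality equations of the MDP in Section~\ref{sec4}, using what Theorem~\ref{THRESHOLD} already provides: the optimal policy is a deterministic multi-threshold strategy, and (as established in its proof in Appendix~\ref{ap:prooftheor1}) the relative value function $h(x,l)$ is non-increasing in $x$. With $g$ the optimal gain and $x^{+}(1)=1$, $x^{+}(0)=\min(x+1,M)$, the Bellman equation $g+h(x,l)=U(x)+\max_{a\in\{0,1\}}\{-p(l)a+\sum_{l'}\lambda_{ll'}h(x^{+}(a),l')\}$ shows that uploading is optimal at state $(x,l)$ exactly when the advantage $\Phi(x,l):=\sum_{l'}\lambda_{ll'}\big[h(1,l')-h(\min(x+1,M),l')\big]$ is at least $p(l)$. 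Since $h(\cdot,l')$ is non-increasing, $\Phi(x,l)$ is non-decreasing in $x$ and attains its maximum at $x=M$, so the question of which prices appear in $\mathcal U$ reduces to evaluating $\Phi$ at the largest age reachable under the candidate policy, i.e.\ essentially at age $M$.

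The second ingredient is an explicit evaluation of $\Phi$ for the candidate policy that uploads at exactly the locations with price at most $P_i$ (the zero-priced locations always triggering an upload, since $\thresholdrange{1}=0$). Starting from age one the age just increments, capped at $M$, until the device next enters $\mathcal L_i$, where it resets; comparing the ``freshest'' age trajectory against the ``stalest'' one step by step and telescoping the utilities $U(1),\ldots,U(M)$, the probability $1-K_{lP_i}$ of not immediately landing back in $\mathcal L_i$ collects the stale contributions, and the value of $\Phi$ at age $M$ at a location $l\notin\mathcal L_i$ is $\mathcal S(i)$. An analogous but shorter look-ahead — comparing, at a location whose price is exactly $P_i$, the value of one extra upload against remaining stale, split according to $\bar K_{lP_i}$ (the chance the very next location also has price $P_i$) versus age $M$ otherwise — produces the left-hand side of \eqref{ineq:cond1}. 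Hence \eqref{ineq:cond1} is precisely the inequality ``turning on uploads at price $P_i$ strictly improves the policy,'' while $\mathcal S(i)<p(l)$ for $l\notin\mathcal L_i$ is precisely ``a location of price $p(l)>P_i$ should \emph{not} upload.''

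Given those two evaluations the three bullets are assembled by a standard policy-improvement/uniqueness argument. For the first bullet: if $\mathcal S(1)<p(l)$ for every $l\notin\mathcal L_1$, then ``upload iff the price is zero'' is a fixed point of the Bellman optimality equation, so by the uniqueness part of Theorem~\ref{THRESHOLD} it is the optimal policy and $\mathcal U=\{P_1\}$; conversely $\mathcal U=\{P_1\}$ forces $\Phi(M,l)<p(l)$ at every $l\notin\mathcal L_1$, which is $\mathcal S(1)<p(l)$. For the second bullet: $P_i\in\mathcal U$ iff uploads at $P_i$ both are worth turning on (the increasing direction, equivalent to \eqref{ineq:cond1}) and are not dominated by also turning on a strictly higher price (the stopping direction, equivalent to $\mathcal S(i)<p(l)$ for $l\notin\mathcal L_i$). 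The third bullet then follows from the monotonicity of the optimal thresholds in $x$ and in the price index: once $P_i$ is on, each higher price $P_{i+1},\ldots$ is pulled in automatically until the first index $i+k$ at which the stopping inequality $\mathcal S(i+k)<p(l)$ holds for all $l\notin\mathcal L_{i+k}$.

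The step I expect to be the real obstacle is making the evaluation of $\Phi$ in the second paragraph airtight, because the argument is self-referential: the set of prices that is ``on'' determines $h$, which determines $\Phi$, which determines which prices should be ``on.'' One must therefore verify that the candidate policy is a genuine fixed point of policy improvement, not merely improved by a single step, and that the one-shot quantities $\mathcal S(i)$ and the left side of \eqref{ineq:cond1} — with their truncation at age $M$ and their single factor $1-K_{lP_i}$ in place of a full Markovian geometric series — correctly capture the continuation value. Keeping the age-cap boundary terms $U(M)$ straight (this is exactly why $\mathcal S(i)$ is a truncated sum $\sum_{x=2}^{M}(\cdot)$ plus a separate $U(1)-U(M)$, and why \eqref{ineq:cond1} features $U(2)$ and $U(M)$ rather than a geometric tail) is where most of the bookkeeping lies.
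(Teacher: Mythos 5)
Your plan is correct and follows essentially the same route as the paper's proof: the paper likewise characterizes membership in $\mathcal{U}$ through the sign of the Bellman advantage $\Delta H(x,l)$, uses monotonicity of the value function in the age to reduce the optimality condition to the extreme age $M$, and then computes $V(1,l)-V(M,l)$ explicitly under the candidate policy (uploads only in $\mathcal{L}_i$) to obtain the condition $\mathcal{S}(i)<p(l)$. The only difference is presentational: the paper writes this verification out only for the first bullet and declares the remaining cases ``similar,'' whereas you additionally sketch the one-step look-ahead yielding \eqref{ineq:cond1} and make explicit the fixed-point argument resolving the apparent circularity between the candidate upload set and the value function.
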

The proof  of the above theorem is available in Appendix.~\ref{app:propo}.  

Theorem~\ref{PR-THRE}  establishes  conditions under which devices upload data if and only if they are found in a given finite set of locations.  In the following section, we derive optimal pricing assignments minimizing SP costs while still satisfying users QoS requirements, for a given assignment of price to locations.

\begin{corollary}
If $P_1 \neq 0$, $\mathcal{U} = \emptyset$,  if and only if 
\begin{equation}
\sum_{x = 1}^{M}(U(x)-U(M)) < p(l),\;\; \forall  l\in {\mathcal L},
\label{c1}
\end{equation}
\label{c2}
\end{corollary}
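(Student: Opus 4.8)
The plan is to read this off as the degenerate endpoint of the chain of conditions in Theorem~\ref{PR-THRE}, backed by a direct renewal--reward argument. The first step is to record that the set of used prices $\mathcal{U}$ is a down-set in the price order: by Theorem~\ref{THRESHOLD} the optimal policy is multi-threshold with $\thresholdrange{1}\le\cdots\le\thresholdrange{K}$, so if it ever uploads at a location priced $P_j$ it also uploads at every location priced $\le P_j$ for all large enough ages; hence $\mathcal{U}\neq\emptyset$ forces $P_1\in\mathcal{U}$, and $\mathcal{U}=\emptyset$ is equivalent to saying it is \emph{not} optimal to upload even at the cheapest locations $\mathcal{L}_1$. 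This is precisely Theorem~\ref{PR-THRE} ``one level below'' its first bullet, i.e.\ formally with a fictitious index $i=0$ and $\mathcal{L}_0=\emptyset$, so that $K_{lP_{0}}=\sum_{l'\in\mathcal{L}_0}\lambda_{ll'}=0$ and $\mathcal{S}(0)=\sum_{x=2}^{M}(U(x)-U(M))+U(1)-U(M)=\sum_{x=1}^{M}(U(x)-U(M))$; the condition ``$\mathcal{S}(0)<p(l)$ for all $l\in\mathcal{L}\setminus\mathcal{L}_0=\mathcal{L}$'' is then exactly \eqref{c1}. Note that the hypothesis $P_1\neq0$ is essential: when $P_1=0$, resetting the age costs nothing and cannot decrease the expected future utility since $U$ is non-increasing, so $P_1\in\mathcal{U}$ always --- compare the decomposition \eqref{eq:rewr}, in which $P_1$ is the unavoidable per-message energy cost.

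To make this self-contained I would argue directly. The never-upload strategy drives the age monotonically to its cap $M$ and incurs no cost, so its long-run average reward is $U(M)$. For an arbitrary policy, write the slot reward $r_t=U(x_t)-p(l_t)a_t$ as $U(M)+\bigl(U(x_t)-U(M)\bigr)-p(l_t)a_t$, with \emph{excess utility} $U(x_t)-U(M)\ge0$, and bound the excess utility accumulated over any time interval between two consecutive uploads: $k$ slots after an upload the age equals $\min(k,M)$ and $U$ is non-increasing, so that sum is at most $S^{\star}:=\sum_{x=1}^{M}(U(x)-U(M))$, while each upload in the interval costs at least $\min_{l}p(l)$. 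Over a horizon $\eta$ with $N(\eta)$ uploads the average reward is therefore at most $U(M)+\bigl[(N(\eta)+1)S^{\star}-N(\eta)\min_{l}p(l)\bigr]/\eta$. If \eqref{c1} holds, i.e.\ $\min_{l}p(l)>S^{\star}$, this is $\le U(M)$ for every policy, and is bounded strictly away from $U(M)$ for every policy that uploads at a positive rate; by ergodicity the optimal policy, if it uploaded at all, would upload at $\mathcal{L}_1$ at a positive rate, so it cannot beat never-upload --- hence $\mathcal{U}=\emptyset$.

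For the converse I would take the contrapositive. If $p(l_0)\le S^{\star}$ for some $l_0$, consider the cyclic policy ``upload only when $x=M$ and the current location is $l_0$''. Between two consecutive uploads the age climbs once from $1$ to $M$ and then idles at $M$, so it accumulates exactly $S^{\star}$ of excess utility per upload cycle at a cost of $p(l_0)$; by renewal--reward its average reward is $U(M)+\bigl(S^{\star}-p(l_0)\bigr)/\E[\text{cycle length}]\ge U(M)$. Thus never-upload is not the unique optimum, so $\mathcal{U}\neq\emptyset$. Together with the previous paragraph this gives \eqref{c1}$\iff\mathcal{U}=\emptyset$.

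The step I expect to be the real work is the uniform bound ``excess utility over an inter-upload interval $\le S^{\star}$'' against an \emph{arbitrary}, possibly randomized and history-dependent policy, and --- relatedly --- the boundary case $\min_{l}p(l)=S^{\star}$: there the break-even cyclic policy \emph{ties} the never-upload strategy, and one must invoke the uniqueness part of Theorem~\ref{THRESHOLD} to conclude that this tie cannot occur at an optimum, which is what forces the inequality in \eqref{c1} to be strict rather than weak. (Equivalently, along the Theorem~\ref{PR-THRE} route the work is checking that every step of that theorem's proof survives when the lower price class is taken to be empty.)
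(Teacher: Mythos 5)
Your proposal is correct, and it actually contains the paper's own argument as its first half: the paper proves Corollary~\ref{c2} by rerunning the proof of Theorem~\ref{PR-THRE}(i) with the upload set taken to be empty --- it writes the never-upload optimality condition $\Delta H(x,l)<0$ for all $l\in\mathcal{L}$, reduces it via monotonicity of $V$ to $\sum_{l'}\lambda_{ll'}\bigl(V(1,l')-V(M,l')\bigr)<p(l)$, computes $V(1,l)-V(M,l)=\sum_{x=1}^{M}\bigl(U(x)-U(M)\bigr)$ from the Bellman equation under the never-upload policy, and substitutes to obtain \eqref{c1}. That is precisely your ``fictitious level $i=0$, $\mathcal{L}_0=\emptyset$, $\mathcal{S}(0)=\sum_{x=1}^M(U(x)-U(M))$'' reading, so on that route you and the paper coincide. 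Your second, self-contained renewal--reward argument is genuinely different: instead of value functions it bounds the excess utility $U(x_t)-U(M)$ accumulated over any inter-upload interval by $S^{\star}$ and compares against the per-upload cost, with the cyclic ``upload at $l_0$ when $x=M$'' policy for the converse. What the paper's route buys is brevity and uniformity with Theorem~\ref{PR-THRE} once the MDP machinery of Lemma~\ref{lem:monotone} is in place; what yours buys is an elementary, policy-agnostic sample-path interpretation of the threshold quantity $S^{\star}$ (total recoverable excess utility per upload cycle) that does not presuppose the value-function computation, plus an explicit identification of the boundary case $\min_l p(l)=S^{\star}$, where the strict inequality in \eqref{c1} versus the weak inequality $\Delta H\le 0$ for optimality of never-upload is glossed over by the paper's ``if and only if'' just as much as by your sketch --- so flagging that tie-breaking (via the uniqueness claim of Theorem~\ref{THRESHOLD}) as the remaining work is fair and, if anything, more careful than the published proof.
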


Proof of the above corollary is available in Appendix ~\ref{pr:corro2}



\section{Joint  {Aging} Control and  Traffic Offloading}~\label{sec5}



Once we determined the optimal distributed upload control, we return to the \textsc{Traffic Offloading}  problem introduced in Sec.~\ref{sec:P1}. 
%
\subsection{Pricing as a tool for joint aging control and offloading }

We recall that the SP aims at setting optimally the value of the shadow prices to reduce the total cost to lease resources from different InPs. Let us assume $N$ IoT devices spread over the set of locations ${\Loc}$. Each device generates data to be collected and sent to the IoT server located in the core network every $\Tseconds$ seconds.  Let $\pi_j$ be the ergodic probability of a device collecting data at location $j$ -- which in turn depends on the mobility profile of devices. Hence, the total rate of collected  data  by devices in location $j$ is given by
\begin{equation}
D_j ={N \pi_j F}/{\Tseconds},
\end{equation}
where $F$ is the average size of the collected data.  

First, observe that if shadow prices are constant over locations, i.e., $p(l)=P_1$, for $l\in {\Loc}$, each device will transmit immediately the collected data and  the total cost for SP is
\begin{equation}
\sum_{j\in {\mathcal L}} C_j D_j=  \frac{NF}{\Tseconds}\sum_{j\in {\mathcal L}}  C_j \pi_j.
\end{equation}
Our primary interest for the distributed upload control via shadow pricing is to perform load balancing by shifting part of the traffic load from highly congested locations, which we expect indeed to be more expensive to lease, to lesser charged  locations. At the same time, we aim at ensuring the QoS requirements of the IoT service. Under shadow pricing  vector $\bm p$, the total rate uploaded at location $j$  under the optimal threshold strategy 
is given by 
\begin{equation}
Y_j({\bm p}) = \sum_{i\in {\mathcal L}} Y_{ij}({\bm p}) = \sum_{i\in {\mathcal L}} D_{i} y_{ij}({\bm p})  =  \frac{NF}{\Tseconds}\sum_{i\in {\mathcal L}}  \pi_i y_{ij}({\bm p}).
\label{rateu}
\end{equation}
Hence the total cost writes 
\begin{equation}
\sum_{j\in {\mathcal L}} C_j Y_j({\bm p})=  \frac{NF}{\Tseconds}\sum_{j\in {\mathcal L}} C_j\left(\sum_{i\in {\mathcal L}} \pi_i y_{ij}({\bm p}) \right).\label{eq:cy}
\end{equation}
In what follows, we leverage the above equation as the objective of our optimization problem.

\subsection{Formulation of joint offloading and aging control problem}

Next, we  account for the  \textsc{Aging Control} problem introduced in Sec.~\ref{sec4}
 under the   \textsc{Traffic Offloading}  problem introduced in Sec.~\ref{sec3}. The resulting
joint problem is posed as follows. 
\begin{align}
 & \textsc{Joint  Offloading and Aging Control (JOAC)}: \nonumber \\
 &\hspace{10pt} \underset{{\bm p}}{\text{minimize}} \sum_{i\in {\mathcal L}}\pi_{i} \sum_{j\in {{\Loc}}}    y_{ij}({\bm p}) C_ j \label{C0}
 \end{align}
subject to
\begin{align}
& \hspace{10pt} \hspace{10pt}  \sum_{i \in {{\Loc}}}  \pi_i y_{ij}({\bm p}) \leq f(B_j),\; \forall j \in {\mathcal L} \label{C1}\\
&  \hspace{10pt} \hspace{10pt}  \sum_{j \in {\mathcal L}}  y_{ij}({\bm p}) = D_i,\; \forall i \in {\mathcal L} \label{C1b} \\
& \hspace{10pt} \hspace{10pt}  \mathbb{P}( \Delta_i({\bm p}) > d) \leq \epsilon,\; \forall i \in {\mathcal L}\label{C2}\\
& \hspace{10pt} \hspace{10pt} y_{ij}({\bm p}) \geq 0,\;  \forall i, \forall j \in {\mathcal L}\label{C2b}
\end{align}
where $y_{ij}$ is the expected per device upload rate for data collected at location $i$ and uploaded at location $j$.   

We can calculate $y_{ij}({\bm p})$ based on the threshold strategy from section \ref{sec4}: we need to calculate the probability that a device collects data at location $i$ and uploads it at  location $j$. The calculation is performed by determining $f(i, j, t)$, namely the probability that a device collects data from location $i$ and uploads it at time $t$ in location  $j$. Such computation involves the use of taboo probability, defined as follows:
$$ 
\taboo{A\;\;}{i}{j}{n} = \mathbb{P}\left(l_1,..,l_{n-1} \not\in A, l_{n}=j  |  l_0=i \right). 
$$
This is the  probability   of moving from location $i$ to location $j$ in $n$ steps without entering the taboo set  $A$; such transition probabilities are calculated in the standard way by considering the $n$-th power of the taboo matrix, which is obtained by zeroing the columns and the rows of the transition probability matrix corresponding to the taboo states, i.e., the states in $A$. 
Based on the optimal threshold strategy, if a device collects data from a location $i\in {\mathcal L}_1$, it will immediately upload it.  Thus for  $i\in {\mathcal L}_1$, we have 
$$
f(i, z , t; {\bm p})= \begin{cases}
1,   & \mbox{  if }   z=i \mbox{ and } t=1,\\
0,  & \mbox{otherwise.}
\end{cases}
$$
For $i\not\in {{\Loc}}_1$ and $z\in {\mathcal L}_j$, let us consider $\widehat{\threshold}^{(t)}=\max(\thresholdrange{j}|\thresholdrange{j}<t)$. The explicit expression can be derived as follows
\begin{eqnarray}\label{eq:fizt}
&&\hskip-11mm f(i, z, t; {\bm p})=0,  \mbox{  for }  t  < \thresholdrange{j}\\
&&\hskip-11mm f(i, z , t; {\bm p})=\nonumber\\
&&\hskip-11mm \sum_{l_1\not\in {\Loc}_1}\sum_{l_2\not\in {\Loc}_2} \dots \!\!\!\sum_{l_{t-1}\not\in {\Loc}_{\widehat{\threshold}^{(t)}}}  \taboo{{\Loc}_1\;}{i}{l_1}{n_1}\cdot \taboo{{\Loc}_2\;}{l_1}{l_2}{n_2}\ldots \taboo{{\Loc}_{\widehat{\threshold}^{(t)}}}{l_{t-2}}{l_{t-1}}{t-{\widehat{\threshold}^{(t)}}-1}\lambda_{l_{t-1} j}\nonumber\\
&&\hskip40mm  \mbox{  for }   \thresholdrange{j} \leq t  \leq \thresholdrange{K} \\
&&\hskip-11mm f(i, z, t)=0,  
\mbox{  for }  \thresholdrange{K} <  t  \leq M
\end{eqnarray}
where 
$$n_k= \thresholdrange{k+1} - \thresholdrange{k}, \quad  k=1,\ldots,K$$
 with $\thresholdrange{K+1}=M$.  The expression of $y_{iz}$ for $z \in {{\Loc}}_j$  yields 
\begin{equation}\label{eq:yiz}
y_{iz} =\sum_{t=\thresholdrange{j}}^{\thresholdrange{K}} f(i,z,t; {\bm p}).
\end{equation}
Once we obtained the values of $f(i, z, t)$, we can derive the stationary probability distribution for the age of information -- at the upload time -- for the data collected at location $i$, namely  $\Delta_i({\bm p})$,  
\begin{equation}\label{eq:CCDF}
F_{\Delta_i}(d):=\mathbb{P}( \Delta_i({\bm p}) > d)= \sum_{t=d+1}^{\thresholdrange{K}}  \sum_{z\in {\mathcal L}}   f(i, z, t; {\bm p}). 
\end{equation}
Relation \eqref{eq:CCDF} provides an important measure for SP: it is the probability that an input shadow price vector can meet the requirements for the IoT data collected at a tagged location. Furthermore starting from $F_{\Delta_i}(d)$, it is possible to evaluate the deviation of the age of collected data from its average value, e.g., by using Chebyshev inequality.

Finally, the expected age of collected data from location $i\in {\mathcal L}$  is given by
\begin{equation}\label{eq:E}
 \E [ \Delta_i({\bm p})] = \sum_{t=0}^{\thresholdrange{K}}  \sum_{z\in {\mathcal L}}  t \cdot f(i, z, t;\bm{p}) 
\end{equation}

The  \JOAC problem is a constrained non-linear integer valued optimization problem defined over the set of multi-threshold policies. Finding 
a solution is made difficult because the structure of function $Y$ is not convex over the shadow price vectors ${\bm p}$. In what follows, we propose a heuristic algorithm which utilizes  the structure of the devices' optimal strategy to solve the problem. 


\section{Algorithms for Optimal Pricing} 
\label{sec:algorithms}

We introduce   efficient algorithms to solve the joint traffic offloading and aging control pricing problem. 
The algorithms are driven by the rationale according to which a shadow price vector should permit to offload as much traffic as possible towards locations with smaller
 costs.   In order to obtain the optimal pricing, we begin
 by showing that optimal prices correspond to optimal users thresholds, allowing us to simplify analysis through the control of thresholds rather than prices (Section~\ref{sec:prices2thresholds}).
 Then, we consider a Markov Chain Monte Carlo (MCMC) approach to find  the optimal thresholds (Section~\ref{sec:mcmc}), followed
 by  its simulated annealing (SA) extension -- a standard technique for constrained combinatorial optimization problems~\cite{pincus70,connnors88} (Section~\ref{sec:sa1}).
 The special nature of our problem  allows us to further refine the SA solution leveraging the independence of nodes that are geographically far apart (Section~\ref{sec:sa2a} and \ref{sec:sa2b}).

\subsection{From prices to thresholds}
\label{sec:prices2thresholds}

In the \JOAC  problem introduced in the previous section,  shadow prices set by SP are our control variables.
Next, we argue that thresholds set by users 
can alternatively be taken as our controls. Indeed,  SP prices impact users thresholds, and users thresholds  impact load at different locations. Hence,
framing the problem exclusively based on users thresholds rather than prices simplifies the analysis.

Let $\threshold_l$ be the AoI threshold corresponding to location $l$.
A threshold   $\threshold_l=t$  means a device uploads   data collected  at location $l$ only if its age exceeds $t$.  Then, the  threshold vector    ${\thresholdvector}$ is an $L$ dimensional vector given by    ${\thresholdvector} =(\threshold_1, \threshold_2, \ldots, \threshold_L)$,  comprising one threshold per location. 

Let $\threshold_{max}$ be the maximum threshold  induced from all  pricing vectors in ${\mathbb R}^L$.  
Then,  $\threshold_{max}$ is given by 
\begin{equation}
\threshold_{max} = \max\{t\in {\mathbb N} \mid  \mathbb{P}(\Delta_i (t, {\bm 0}_{-i})>d) <\epsilon, \forall i \in {\cal L} \}
\end{equation}
where $\Delta_i (t, {\bm 0}_{-i})$ is the age of data collected  at location $i$, in a setup wherein all locations except $i$ correspond to threshold $\threshold_i=0$.
  Let $S$ be the set of feasible threshold values, i.e., $S=\{0,1,\ldots, \threshold_{max}\}$. 

Finally, the threshold-based  \JOAC is given as follows:
\begin{align}
 \textsc{Threshold-based JOAC (T-JOAC): } &  \nonumber  \\
&\hspace{-90pt} \min_{{\thresholdvector}\in {\thresholdvector}_{\epsilon,d}}  W({\thresholdvector})  :=  \sum_{i\in {\mathcal L}} Y_j({\thresholdvector})  C_j\l   \\
&\hspace{-90pt}  y_{ij}({\thresholdvector}) \geq 0,\;  \forall i, \forall j \in {\mathcal L}  
\label{unP1}
\end{align}
where 
\begin{equation}
\thresholdvector_{\epsilon,d} =\{\thresholdvector\in  S^{L} \mid \mathbb{P}(\Delta_i (\thresholdvector)>d) <\epsilon,\;  Y_{i}(\thresholdvector)  \leq B_i, \forall i  \}. \label{eq:S}
\end{equation}

In the above formulation, the objective function corresponds to~\eqref{eq:cy}-\eqref{C0} in  \JOAC. The constraints~\eqref{unP1} and~\eqref{eq:S}
capture~\eqref{C2b} and~\eqref{C0}-\eqref{C2}, respectively.

Let ${\thresholdvector}^*_{\epsilon, d}$ be the set of optimal threshold vectors,
\begin{equation}
{\thresholdvector}^*_{\epsilon, d} = \{\thresholdvector\in \thresholdvector_{\epsilon, d}  \mid  W(\thresholdvector) = \min_{\thresholdvector'\in \thresholdvector_{\epsilon, d} } W(\thresholdvector')\}.  \label{eq:sstar}
\end{equation}
Next, we present efficient algorithms to find elements in ${\thresholdvector}^*_{\epsilon, d}$.


\subsection{Markov Chain Monte Carlo (MCMC)} \label{sec:mcmc}

MCMC starts from a feasible solution and attempts to improve it by performing random perturbations. 
  A key feature of MCMC is the use of trial and error to avoid being trapped at local minima. Furthermore, it is simple to implement in a distributed way. 

Given the current state, 
the procedure generates a trial state at random and evaluates the objective function at that state. If the trial state  improves the objective function,  i.e., 
if the objective function evaluated at  the trial state is better than at the current state,
 the system jumps to this new state. Otherwise, the trial is accepted or rejected 
 based on a certain probabilistic criterion. The main feature of the procedure is that a worse off solution may be accepted as a new solution with a certain probability.

%


Next, we introduce the  Boltzmann-Gibbs distribution corresponding to    \TJOAC,  
\begin{equation}\label{eq:BG}
\pi_T({\thresholdvector}) = \frac{1}{Z} \exp^{-W({\thresholdvector})/T},
\end{equation}
where $Z$ is a normalization constant 
\begin{equation}\label{eq:BG}
Z = \sum_{{\thresholdvector} \in S^L} \exp^{-W({\thresholdvector})/T},
\end{equation}
and $T$ is a constant, referred to as the temperature, and  whose discussion  is deferred to the upcoming section.

MCMC has multiple flavors. Next, we consider the  most common MCMC method, namely Metropolis--Hastings (MH)\cite{MH}.
One of the ingredients of MH is a transition matrix $Q^*$ for \emph{any} irreducible discrete time Markov chain (DTMC). The states of the DTMC 
are given by the reachable \TJOAC threshold vectors (see~\eqref{eq:S}).    
Chain $Q^*$ is the \emph{proposal chain}, as samples collected from $Q^*$ are the proposal  threshold vectors. Then, based on those proposals,
the MH algorithm decides whether or not they will be accepted.

Although the algorithm works for any irreducible proposal chain, the choice of the chain impacts its time to convergence.  
We begin by considering the simplest   chain, whose transition matrix is uniform and symmetric.  Noting that the  threshold vector is  an $L$-dimensional vector
 $\thresholdvector$,
in the simplest setting, we allow every single component of the vector to be updated conditional on the other $L-1$ components being fixed and given.  
In this case, the algorithm is also known as Gibbs sampler, and is a special case of the MH algorithm.  
 Then, in Sections~\ref{sec:sa2a} and~\ref{sec:sa2b} we indicate how to leverage spacial information
to refine the proposal matrix and reduce convergence time by allowing multiple dimensions of vector $\thresholdvector$ to be updated simultaneously. 

 
 The proposal chain is given by  $Q^*$. Let $q^*({\thresholdvector}, {\thresholdvector}')$ be the entry at position $({\thresholdvector}, {\thresholdvector}')$ of the corresponding transition matrix:
\begin{align}
q^*({\thresholdvector}, {\thresholdvector}')= \left\{
\begin{array}{ll}
\frac{1}{L \, (\threshold_{max}-1) }, &  \exists i:  \threshold_i\not=\threshold_i' \textrm{ and } \threshold_j=\threshold_j', \forall j \neq i, \\
0, & \textrm{otherwise.} \\
\end{array}
\right.
\label{tran}
\end{align}
Clearly, $\sum_{{\thresholdvector}'}q^*({\thresholdvector}, {\thresholdvector}') = 1$ as each transition corresponds to the change of one of the $L$ thresholds to one of the distinct $\threshold_{max}-1$ values.

Let $\delta({\thresholdvector},{\thresholdvector}')$ be the   change in the objective function when going from  ${\thresholdvector}$ to ${\thresholdvector'}$,
\begin{equation}
\delta({\thresholdvector},{\thresholdvector'})=W({\thresholdvector}')-W({\thresholdvector})=\sum_{j} \delta_j({\thresholdvector},{\thresholdvector'}) \label{eq:deltasum}
\end{equation}
where
\begin{equation}
\delta_j({\thresholdvector},{\thresholdvector'})= ( Y_j({\thresholdvector}') - Y_j({\thresholdvector}))C_j. \label{eq:deltaj}
\end{equation}

To describe the Markov chain ${\thresholdvector}{(0)}, {\thresholdvector}{(1)}, \ldots$,  assume that the threshold vector at iteration $t$ is given by  ${\thresholdvector}$. Then,  the threshold vector    is determined as follows
\begin{enumerate} 
\item choose     threshold vector   ${\thresholdvector}'$ according to $Q^*$, i.e., choose       ${\thresholdvector}'$ with probability given by~\eqref{tran}.  
Threshold vector   ${\thresholdvector}'$ is the \emph{proposal  threshold vector};
\item let the \emph{acceptance function} be given as follows,
\begin{equation}
\tilde{a}({\thresholdvector},{\thresholdvector}')=\frac{\pi_T({\thresholdvector}')}{\pi_T({\thresholdvector})}=e^{-\delta({\thresholdvector},{\thresholdvector}')/T}. \label{eq:acceptancef}
\end{equation}
If $\tilde{a}({\thresholdvector},{\thresholdvector'}) \ge 1$, i.e., if $\delta({\thresholdvector},{\thresholdvector}') \le 0$, then $\thresholdvector'$ is accepted, ${\thresholdvector}{(t+1)} \leftarrow {\thresholdvector'}$. Otherwise, it is accepted with probability $\tilde{a}({\thresholdvector},{\thresholdvector'})$ and rejected otherwise. If rejected, the threshold vector remains unchanged,  ${\thresholdvector}{(t+1)} \leftarrow {\thresholdvector}(t)$. 
\end{enumerate}

The proof of the following statement is reported in Appendix~\ref{app:lemma1}.
\begin{lemma}
\label{lem1}
The Markov chain produced by the above algorithm  has stationary distribution $\pi_T$. 
\end{lemma}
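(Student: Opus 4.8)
The plan is to verify that $\pi_T$ satisfies the detailed balance equations with respect to the Metropolis--Hastings transition kernel, and then invoke the standard fact that a reversible distribution is stationary. First I would write down the actual transition probabilities $P({\thresholdvector},{\thresholdvector}')$ of the chain produced by the algorithm: for ${\thresholdvector}'\neq{\thresholdvector}$, $P({\thresholdvector},{\thresholdvector}') = q^*({\thresholdvector},{\thresholdvector}')\,\min\{1,\tilde a({\thresholdvector},{\thresholdvector}')\}$, and the diagonal term $P({\thresholdvector},{\thresholdvector})$ absorbs all rejected proposals so that each row sums to one. The diagonal term never needs to be computed explicitly, which is the whole point of checking detailed balance rather than stationarity directly.

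The core computation is to show $\pi_T({\thresholdvector})\,P({\thresholdvector},{\thresholdvector}') = \pi_T({\thresholdvector}')\,P({\thresholdvector}',{\thresholdvector})$ for every pair ${\thresholdvector}\neq{\thresholdvector}'$. The case ${\thresholdvector}={\thresholdvector}'$ is trivial. If ${\thresholdvector}'$ is not reachable from ${\thresholdvector}$ in one step, then $q^*({\thresholdvector},{\thresholdvector}')=0$; by the symmetry of the proposal matrix~\eqref{tran}, $q^*({\thresholdvector}',{\thresholdvector})=0$ as well, so both sides vanish. For the remaining case, where ${\thresholdvector}$ and ${\thresholdvector}'$ differ in exactly one coordinate, I would use $q^*({\thresholdvector},{\thresholdvector}')=q^*({\thresholdvector}',{\thresholdvector})=\tfrac{1}{L(\threshold_{max}-1)}$ and the identity $\tilde a({\thresholdvector},{\thresholdvector}') = \pi_T({\thresholdvector}')/\pi_T({\thresholdvector})$ from~\eqref{eq:acceptancef}. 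Then
\[
\pi_T({\thresholdvector})\,q^*({\thresholdvector},{\thresholdvector}')\,\min\{1,\tilde a({\thresholdvector},{\thresholdvector}')\}
= q^*({\thresholdvector},{\thresholdvector}')\,\min\{\pi_T({\thresholdvector}),\pi_T({\thresholdvector}')\},
\]
which is manifestly symmetric in ${\thresholdvector}$ and ${\thresholdvector}'$; the same manipulation applied to the right-hand side gives the identical expression, establishing detailed balance.

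Having detailed balance, I would sum over ${\thresholdvector}$: $\sum_{\thresholdvector}\pi_T({\thresholdvector})P({\thresholdvector},{\thresholdvector}') = \sum_{\thresholdvector}\pi_T({\thresholdvector}')P({\thresholdvector}',{\thresholdvector}) = \pi_T({\thresholdvector}')\sum_{\thresholdvector}P({\thresholdvector}',{\thresholdvector}) = \pi_T({\thresholdvector}')$, which is exactly the statement that $\pi_T$ is a stationary distribution of the chain. I would also note in passing that the state space here should be the restricted set of reachable \TJOAC threshold vectors (the feasible set in~\eqref{eq:S}), on which $Q^*$ is irreducible, so that $\pi_T$ is well defined as a probability distribution on that set and the chain does not leak probability outside it.

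I do not anticipate a genuine obstacle here — this is the textbook Metropolis--Hastings argument — but the one point requiring a little care is the bookkeeping of the diagonal/rejection mass: one must make sure the claim "$\pi_T$ stationary" is about the chain on the feasible set only, and that rejected moves keep the chain inside that set (which they do, since a rejected proposal leaves ${\thresholdvector}$ unchanged and the initial state is feasible). A secondary subtlety is that the normalization constant $Z$ in~\eqref{eq:BG} should really be the sum over the feasible set rather than all of $S^L$ if one wants $\pi_T$ to be a probability measure supported there; but since detailed balance only uses ratios $\pi_T({\thresholdvector}')/\pi_T({\thresholdvector})$, the argument is insensitive to this choice and the conclusion stands regardless.
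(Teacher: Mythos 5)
Your proof is correct, and it takes a mildly but genuinely different route from the paper's. The paper writes out the full Metropolis kernel $q_T$ in \eqref{eq:mattrans}, including the diagonal entry that collects the rejected-proposal mass, and then verifies the global balance equation \eqref{eq:balanc} directly: the incoming sum is split over the sets $C_1(\thresholdvector')$ and $C_2(\thresholdvector')$ according to whether the acceptance ratio is below or at least one, and the symmetry and stochasticity of $Q^*$ are used to cancel terms until only $\pi_T(\thresholdvector')$ survives. You instead prove the pointwise detailed-balance identity $\pi_T(\thresholdvector)\,P(\thresholdvector,\thresholdvector') = q^*(\thresholdvector,\thresholdvector')\min\{\pi_T(\thresholdvector),\pi_T(\thresholdvector')\} = \pi_T(\thresholdvector')\,P(\thresholdvector',\thresholdvector)$ for off-diagonal pairs, after which stationarity follows by summing and using row-stochasticity, with no need to compute the diagonal entry at all. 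Both arguments rest on the same two ingredients, the symmetry of the uniform proposal chain \eqref{tran} and the acceptance function \eqref{eq:acceptancef}; yours is the more economical textbook version and yields reversibility explicitly as a by-product, whereas the paper's version makes the rejection bookkeeping visible at the price of the case split. Your side remarks are also sound and consistent with the paper: the paper's own proof sums over the feasible set $\thresholdvector_{\epsilon,d}$ rather than all of $S^L$, and since only ratios $\pi_T(\thresholdvector')/\pi_T(\thresholdvector)$ enter the computation, the choice of normalization in $Z$ does not affect the conclusion.
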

In what follows, we extend the above algorithm through a simulated annealing approach.

\subsection{Simulated annealing}
\label{sec:sa1}

Next, we consider simulated annealing. It corresponds to allowing $T$ to decrease in time, in order to guarantee the convergence to an optimal threshold vector (and corresponding pricing). 
Let $T_t$ be the temperature at iteration $t$.
 For a temperature  $T_t$, we define an inhomogeneous Markov chain $(Y_t)_{t\in {\mathcal N}} $ with transition kernel $Q_{T_t}$ at time $t$. 
 If $T_t$ decays to zero sufficiently slowly, the Markov chain $Q_{T_t}$ will reach a sufficiently small neighborhood of the target equilibrium,  $\pi_{{T_t}}$. 
 For this reason $T_t$ is called the cooling schedule of SA. In this paper we use the standard cooling schedule in the form  $T_t=\frac{\hat{a}}{\log(1+t)}$ where $\hat{a}>0$ is a constant that determines the cooling rate order.
\begin{theorem}
\label{theom_conv}
If $T_t$ assumes the parametric form 
\begin{equation}
T_t =\frac{\hat{a}}{\log(t+1)}
\end{equation}
where  
\begin{equation}
\hat{a}= \frac{N F}{\Tseconds} \max_{j\in{\cal L}} C_j,
\end{equation}
 then 
\begin{equation}
\lim_{t\to \infty} \pi_{T_t}(\{{\thresholdvector}\in {\thresholdvector}^*_{\epsilon, d}\})=1
\label{conv}
\end{equation}
\end{theorem}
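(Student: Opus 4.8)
This is a standard simulated-annealing convergence statement for a logarithmic cooling schedule, so the plan is to reduce it to the classical theory \cite{pincus70,connnors88} by checking three things: the structural hypotheses of that theory, the concentration of the fixed-temperature invariant law on the set of minimizers, and the quantitative condition relating the cooling constant $\hat{a}$ to the ``depth'' of the deepest non-optimal local minimum. Only the last point uses anything specific to our model, and it will follow from a crude bound on the range of $W$.

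\emph{Preliminaries.} First I would fix the state space: it is the finite feasible set $\thresholdvector_{\epsilon,d}$ of~\eqref{eq:S}, where an infeasible vector proposed by~\eqref{tran} is assigned objective value $W=+\infty$, so by~\eqref{eq:acceptancef} it is rejected almost surely and the chain $(Y_t)_t$ stays in $\thresholdvector_{\epsilon,d}$. The proposal kernel $Q^*$ of~\eqref{tran} is symmetric, hence weakly reversible, and — restricted to the reachable feasible states, which is the state space we actually use — irreducible; the acceptance rule~\eqref{eq:acceptancef} is the usual Metropolis one, and by Lemma~\ref{lem1} the homogeneous chain at fixed $T>0$ has stationary law $\pi_T(\thresholdvector)\propto e^{-W(\thresholdvector)/T}$.

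\emph{Concentration and the depth condition.} Next I would note that, with $W^*=\min_\thresholdvector W(\thresholdvector)$, for $\thresholdvector\notin\thresholdvector^*_{\epsilon,d}$ one has $\pi_T(\thresholdvector)\le e^{-(W(\thresholdvector)-W^*)/T}\to 0$ as $T\downarrow 0$, so $\pi_T$ tends to the uniform law on $\thresholdvector^*_{\epsilon,d}$ and it suffices to control the inhomogeneous chain. Following the attractor/cup construction of the classical theory, let $\mathrm{depth}(\thresholdvector)$ be the least $h\ge 0$ such that from $\thresholdvector$ the single-coordinate moves of $Q^*$ reach a strictly lower-cost state without the cost ever exceeding $W(\thresholdvector)+h$, and set $d^*=\max\{\mathrm{depth}(\thresholdvector):\thresholdvector\in\thresholdvector_{\epsilon,d}\setminus\thresholdvector^*_{\epsilon,d}\}$. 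The classical theorem yields~\eqref{conv} provided $\sum_{t\ge 1}e^{-d^*/T_t}=\infty$; for $T_t=\hat{a}/\log(t+1)$ this series equals $\sum_{t\ge 1}(t+1)^{-d^*/\hat{a}}$, which diverges exactly when $\hat{a}\ge d^*$.

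\emph{The cost bound, and the main obstacle.} It then remains to check $\hat{a}\ge d^*$. Since the depth of any state is at most the oscillation of the objective, $d^*\le \max_\thresholdvector W(\thresholdvector)-\min_\thresholdvector W(\thresholdvector)\le\max_\thresholdvector W(\thresholdvector)$ as $W\ge 0$; and for every feasible $\thresholdvector$ the total uploaded rate $\sum_{j}Y_j(\thresholdvector)$ is independent of $\thresholdvector$ and equals the total generation rate $\sum_i D_i=\frac{NF}{\kappa}\sum_i\pi_i=\frac{NF}{\kappa}$ (each collected unit is uploaded at exactly one location, cf.~\eqref{C1b}), so $W(\thresholdvector)=\sum_j Y_j(\thresholdvector)C_j\le(\max_j C_j)\,\frac{NF}{\kappa}=\hat{a}$. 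Hence $d^*\le\hat{a}$ and~\eqref{conv} follows. The part I expect to be delicate is not this estimate but the structural bookkeeping: one must ensure that $\thresholdvector_{\epsilon,d}$ is connected under the single-coordinate perturbations of~\eqref{tran}, so that $Q^*$ is genuinely irreducible on it and the ``reachable'' state space contains all of $\thresholdvector^*_{\epsilon,d}$; if it is not connected, the claim must be restricted to the component of the initial vector (which must then contain an optimizer), or the proposal chain must be enriched as in Sections~\ref{sec:sa2a}--\ref{sec:sa2b}. Phrasing weak reversibility and the depth condition in exactly the form required by the convergence theorem invoked is the remaining bit of care needed.
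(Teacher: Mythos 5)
Your proof is correct and takes essentially the same route as the paper: both reduce the statement to the classical (Hajek-type) divergence criterion $\sum_{t\ge 1} e^{-d^*/T_t}=\infty$ for the logarithmic schedule and verify it by showing that $\hat{a}=\frac{NF}{\Tseconds}\max_{j\in\mathcal{L}}C_j$ upper-bounds the objective $W$ (all generated traffic is uploaded at unit cost at most $\max_j C_j$), hence bounds the relevant depth constant $d^*$. Your extra care (using Hajek's cup-depth rather than the paper's cruder ``maximum of $W$ over non-global local minima,'' and flagging irreducibility of the feasible set under single-coordinate moves, which the paper leaves implicit) refines but does not change the argument.
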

The above theorem shows that the Markov chain with  transition  matrix $Q_{T_t}$ converges to an   optimal threshold $s^*\in S^*$,
 where $S^*$ is the set of the optimal solutions of the T-\JOAC problem (see~\eqref{eq:sstar}).
\begin{proof}
Our proof   is based on the technique introduced in \cite{hajek88}.  
Note that the objective function $W$ is nonnegative and its maximum   value   is attained when data is collected by all devices at locations where  cost is maximal.
 Thus $\hat{a} >  W({\thresholdvector})$ for all states ${\thresholdvector}$. In particular, 
 letting $d^*$ denote the maximum value of $ W({\thresholdvector})$ at all states ${\thresholdvector}$ which correspond to a local
 but not global minima,  
 we have $\hat{a} > d^*$.  Then,
\begin{eqnarray}\label{eq:suffcond}
\sum_{t=1}^{\infty} \exp^{-d^*/T_t}&=& \sum_{t=1}^{\infty} \exp^{(-\frac{d^*}{\hat{a}} \log(t+1))}\\
& >&   \sum_{t=1}^{\infty} \frac{1}{t+1} = +\infty  
\end{eqnarray}
Theorem 1 in \cite{hajek88} ensures that under the above condition the limit   \eqref{conv} holds, which completes the proof. 
\end{proof}

\begin{algorithm} \small
	\DontPrintSemicolon
	\SetKwComment{comment}{$\triangleright$ \ }{}
	\KwIn{$\mathcal{L},S,{\thresholdvector}(t-1), i$ \comment*[r]{$i$ is the candidate location for AoI threshold change}}
	\SetKwBlock{init}{Initialization}{end}
	\SetKwBlock{assignment}{Assignment Phase}{end}
	\SetKwBlock{test}{Test Phase}{end}
	\SetKwBlock{decision}{Decision Phase}{end}
	\SetKwBlock{stop}{Stop Condition}{end}
	\SetKwBlock{finout}{Final output}{end}
	\assignment{
		Set temperature $T_t={\hat{a}}/{\log(1+t)}$\;
	Set    new threshold $\threshold'_i $ uniformly at random,  $\threshold'_i \in  S\textbackslash\{\threshold_i(t-1)\}$\;
	Set  $\threshold'_j  \leftarrow \threshold_{j}(t-1)$,  $\forall j\in {\mathcal L}\textbackslash\{i\}$\;
	}
	\test{
	At each $j \in \mathcal{N}_i$ locally measure $\delta_j$ (see~\eqref{eq:deltaj}) and $\mathbb{P}(\Delta_j({ \thresholdvector}')>d)$\; 
	Each  location  $j \in \mathcal{N}_i$ sends measurements to location $i$\;
	}
	\decision{
	\If{  ${\thresholdvector}'$   does not satisfy   constraints for all locations in $ \mathcal{N}_i$}{go back to selection phase (line 4)}
	Set  $\delta = \sum_{j\in \mathcal{N}_i \cup \{i\}} \delta_{j}(\bm{\tau}(t-1),\bm{\tau}')$ (see~\eqref{eq:deltasum})\;
	$\threshold_i(t) \leftarrow \threshold_i(t-1)$\;
	\eIf{$\delta \leq 0$}{$\threshold_i(t) \leftarrow \threshold'_i$}{$\threshold_i(t) \leftarrow \threshold'_i$   with probability  $e^{-\delta/T_t}$ (see \eqref{eq:acceptancef})}
	}
	\KwOut{$\threshold_i(t)$}
	\caption{Simulated Annealing (SA) algorithm for \TJOAC at time $t$, at the neighborhood of location $i$}
	\label{algo}
\end{algorithm}

\subsection{Simulated annealing leveraging neighborhoods} \label{sec:sa2a}

\subsubsection{Neighborhood structure}
Next, we leverage the neighborhood structure between locations to specialize SA to our T-JOAC problem. Let the neighborhood set ${\mathcal N}_i$ for  location $i$ be defined as follows: a  location $j$  belongs  to ${\mathcal N}_i$ if $j$ is located within a given radius such that data offloaded to $j$ can be impacted by traffic generated at location $i$.

Note that the neighborhood structure 
 depends primarily on the geographic position of the locations, mobility of the devices, and the maximum time  that a device can  wait before uploading the data.
 Indeed, let $\thresholdvector_{max}$ be a threshold vector wherein all elements equal $\threshold_{max}$.  Such threshold vector corresponds to nodes that defer transmissions as much as possible.  Then, the neighborhood of location $i$ is  defined as follows,
 \begin{equation*}
\mathcal{N}_i =\{ j \in \mathcal{L} \ | \ Y_{ij}(\thresholdvector_{max}) > 0 \ \textrm{ or } \  Y_{ji}(\thresholdvector_{max}) > 0   \}.
\end{equation*}
Indeed, if   traffic at locations $i$ and $j$ does not interfere with each other under the extreme scenario where all thresholds are set to their maximum values, one can safely assume that locations $i$ and $j$ are not neighbors.  
%
%
%
%

 Let  $G(V, E)$ be the location neighborhood graph, where  $V$ is the set of vertices representing the locations and $E$ is the set of  edges, where an edge is a link between two vertices  indicating that  the two corresponding locations are neighbors.

\subsubsection{Simulated annealing leverage neighborhoods}
Hereafter, we describe the detailed implementation of the  simulated annealing algorithm for solving the  T-\JOAC  problem (see Algorithm~\ref{algo}).   
Time is divided into discrete slots. At the first slot, we begin by initializing the thresholds of all locations to zero, which corresponds to a price $p=0$.  Then, at each   time slot $t$,  we let  $T_t=\hat{a}/{\log(1+t)}$ (the initial temperature $T_t$ should be large enough to allow all candidate solutions to be accepted uniformly at random), and the system goes through three phases: assignment, testing, and decision.  
The SP selects a location  $i \in \mathcal{L}$ uniformly at random and run Algorithm~\ref{algo}.  During the assignment phase, the threshold of location $i$ is modified, while  letting all other thresholds unchanged (lines $1-4$).  At the test phase, the SP receives measurements from all locations which are possibly affected by a change in the threshold of location $i$, i.e., from all $j \in \mathcal{N}_i$, and checks whether the newly generated threshold vector ${\thresholdvector}'$ is feasible (lines $5-10$). If it isn't feasible, the algorithm returns to the selection phase. Otherwise, it continues in the decision phase, by assessing the    change in the objective function, $\delta$, again using data from $j \in \mathcal{N}_i \cup \{i\}$  (line $11$).  If the change is negative,  the new threshold vector is accepted (lines $13-14$). Otherwise, it is accepted with probability $\exp(-\delta/T_t)$.   SP repeats this procedure until the established stopping conditions  are satisfied. i.e., either threshold vector is not changed for two successive time slots or $T_t < \varepsilon$. 

\subsubsection{Independent sets}
To accelerate the SA algorithm, we  exploit  independent sets of locations, i.e., a partition of locations into sets where locations within each set are not affected by a change of threshold  that may occur in other locations of the same set. 
In the following paragraph,  we indicate  how the proposal chain can be adapted to account for independent sets of locations, under a serial implementation.  
In practice, the speedup is obtained since the algorithm can be run in parallel for all the locations that belong to the same independent set, as indicated in Section~\ref{sec:sa2b}.   Our experiments demonstrate that this parallelization can attain a two-fold speedup of the run time with respect to the basic implementation of the algorithm.

\subsubsection{Proposal chain leveraging the neighborhood structure} \label{sec:newproposal}
Given the neighborhood structure, we adapt the proposal chain introduced in~\eqref{tran} in order to allow for multiple threshold adjustments at the same iteration. The new proposal chain $\tilde{Q}^*$, whose  $({\thresholdvector}, {\thresholdvector}')$ entry is denoted by  $\tilde{q}^*({\thresholdvector}, {\thresholdvector}')$, is given as follows: 
\begin{align}
\tilde{q}^*({\thresholdvector}, {\thresholdvector}')= \left\{
\begin{array}{ll}
\frac{1}{|{\mathcal M}_{\thresholdvector}| }, &   \textrm{if } {\thresholdvector} \sim {\thresholdvector}' \\
0, & \textrm{otherwise} \\
\end{array}
\right.
\label{tran1}
\end{align}
where
\begin{equation}
\mathcal{M}_{\thresholdvector}=\{ {\thresholdvector}' | {\thresholdvector} \sim {\thresholdvector}'  \}
\end{equation}
and ${\thresholdvector} \sim {\thresholdvector}' $ denotes that threshold vectors ${\thresholdvector}$ and ${\thresholdvector}'$  are adjacent.  Two threshold vectors are adjacent if they differ in at least one position and, in addition, all positions that differ across the two threshold vectors correspond to locations that belong to the same independent set, i.e., in the location neighborhood graph $G(V,E)$ there is no edge between the locations whose thresholds differ.
If each location corresponds to its own independent set, i.e., if we have $L$ independent sets, then $|\mathcal{M}_{\thresholdvector}|=L (\threshold_{max}-1)$ and the above proposal chain reduces back to~\eqref{tran}.

Note that the above proposal chain produces proposal threshold vectors wherein multiple thresholds may change concomitantly with respect to the current threshold vector.  Then, a straightforward adaptation of Algorithm~\ref{algo} accepts or rejects the proposal threshold vector as a whole, treated as a single entity.
  
  In the following section, in contrast, we treat each of the neighborhoods independently.  In particular, at each step of the algorithm, we have multiple new proposal thresholds, which are evaluated in parallel and may be independently accepted
or rejected. 
 Even if one neighborhood rejects a particular proposal for a new threshold, other independent neighborhoods may accept their proposals.

\subsection{Accelerated simulated annealing with parallel computations: a  coloring approach} \label{sec:sa2b}

\begin{table}[t]\caption{Table of notation for coloring algorithm and \TJOAC}
\centering
\begin{tabular}{p{0.3\columnwidth}|p{0.6\columnwidth}}
\hline
{\it Variable} & {\it Description}\\
\hline
$H = \{h_1, \ldots, h_L \}$& set of colors that can be assigned to a location\\
$c_l \in H$ & color of location $l$ \\
$\phi(\bm{c})$& set of colors used by $\bm{c}$\\
${\bm{c}}^{^{(n)}}$&  current coloring vector  in Algorithm~\ref{algo2}  \\
${\bm{c}}^\star$&  best coloring vector so far (broadcast from Algorithm~\ref{algo2} to Algorithm~\ref{ASAA})  \\
${\bm{c}}(t)  $& current coloring vector in Algorithm~\ref{ASAA}  \\
\hline
\end{tabular}\label{tab2:notation}
\end{table}

\begin{algorithm} \small
	\DontPrintSemicolon
	\SetKwBlock{init}{Initialization}{end}
	\KwIn{$\mathcal{L}, H, b,  {\bm{c}}^{(0)}$}
$n \leftarrow 1$

$\bm{c}^\star \leftarrow {\bm{c}}^{(0)}$

\While{true}{
	Set  $\tilde{T}_n = b/\log(1+n)$\;
	Select location $l\in {\cal L}$ and  color  ${c}'_l \in H\textbackslash\{c_l^{(n-1)}\}$\; 
	Set ${c}'_j \leftarrow c_j^{(n-1)} \quad \forall j \in {\mathcal L} \textbackslash \{l\}$\;
	\If{$\bm{c}'$ is not feasible}{go to step 5}
	Set $\beta  = |\phi(\bm{c}')|-|\phi(\bm{c})|$\;
	$\bm{c}^{(n)}\leftarrow \bm{c}^{(n-1)}$\;
	\eIf{$\beta \leq 0$}{$\bm{c}^{(n)}\leftarrow \bm{c}'$}{$\bm{c}^{(n)}\leftarrow \bm{c}'$ with probability  $e^{-\beta/\tilde T_n}$}
	\If{$|\phi(\bm{c}^{(n)})| <|\phi(\bm{c}^\star)|$  }{$\bm{c}^\star \leftarrow \bm{c}^{(n)}$\;
	broadcast   new coloring   $\bm{c}^\star$ to all locations\; }
	$n \leftarrow n+1$
	}  
%
		\caption{SA-based coloring agorithm}
	\label{algo2}
\end{algorithm}


\subsubsection{Colorings} \label{sec:colorings}
Fewer independent sets correspond to more opportunities for concomitant threshold adjustments. 
A partition of the locations into independent sets can be obtained as the result of a graph coloring procedure. The coloring of a graph is a function that assigns different colors to adjacent vertices of a graph. 

 Given  a graph $G$, the Graph Coloring Problem (GCP) seeks the minimum number of colors $\chi(G)$ which can be used to color $G$. Such a number is called the chromatic number. An upper bound for the chromatic number is given by 
 the maximum vertex degree plus one, and is attained by a  greedy coloring procedure. Nonetheless,  such upper bound (may be loose) and optimal coloring is an $\mathcal {NP}$-hard problem, motivating heuristic solutions.  We observed that simulated annealing can be used as a heuristic to solve GCP~\cite{SAcoloring2} to obtain a near optimal solution.  We can thus account for the coloring process and the resulting parallelization to \TJOAC in an extension to Algorithm~\ref{algo}.

\subsubsection{SA for colorings}

The location coloring is obtained by using a specialized simulated annealing algorithm. Let $H = \{ h_1,h_2,...,h_L \}$ be the set of available colors. 
 Then, the 
 %
coloring (vector) $\bm c$ indicates, for each location, its corresponding color, i.e., $c_l=h_m$ if the color of location $l$ equals $h_m$. 

 A coloring is {\em feasible} if, for any pair of locations $i, j$ such that $i\in {\mathcal N}_j$, we have $c_j\not = c_i$.  
  Let $\phi({\bm c})$ be the set of  colors used by coloring  ${\bm c}=(c_1,c_2,..,c_L)$. Thus the objective of the  coloring problem  is to find a feasible solution that minimizes  the cardinality of set $\phi({\bm c})$. The algorithm begins with an initial feasible coloring scheme, then improve it further using simulated annealing. The initial coloring schemes could be achieved with a random allocation of one color to each location, where the number of colors are equal to the number of locations or an intermediate solution is obtained by greedy algorithm. The temperature  $\tilde{T}_n$ is structurally similar to  $T_t$ used in Algorithm~\ref{algo}, noting that now  $b$ plays the role of $\hat{a}$, and corresponds  to an upper bound on the  number of colors to be adopted. In the simplest setting, we let $b=L$.

Algorithm~\ref{algo2} is used to continuously search for better colorings.
 In lines $5$ and $6$ 
  the algorithm chooses a location $l$ uniformly at random, and  a color from $H\textbackslash\{c_l^{(n-1)}\}$. Lines $7$ and $8$ produce the proposal coloring vector ${\bm c}'$. 
   Lines $9-15$ test if the proposal is accepted or not. Finally, if the new coloring vector uses fewer colors than the current best candidate, the new coloring vector is broadcasted to all locations.  

 \begin{algorithm}[h!] \small
    \DontPrintSemicolon 
    \SetKwBlock{DoParallel}{run in parallel}{end}
    \SetKwBlock{init}{Initialization}{end}
    \SetKwBlock{SA}{Optimal threshold search}{end}
    \SetKwBlock{coloring}{location coloring}{end}
    \KwIn{$\mathcal{L}, H, S,  \varepsilon,  {\thresholdvector}(t-1), {\bm c}(t-1),  {\thresholdvector}^\star$}

          ${\bm c}(t) \leftarrow {\bm c}(t-1);$ \qquad ${\thresholdvector}(t) \leftarrow {\thresholdvector}(t-1)$

    \If{there exists untreated broadcasted ${\bm c}^\star$}{    
     update colorings at all locations\;
      ${\bm c}(t) \leftarrow {\bm c}^\star$} 
     
     Choose uniformly at random a color $h(t)  \in \phi({\bm c})$\;
    
%
    \SA{
    \DoParallel{
    
    run Algorithm~1  at neighborhood of first location with color $h(t)$\;
    
    
    $\quad \vdots$

        run Algorithm~1  at neighborhood of last location with color $h(t)$\;

    }
    
    Update               ${\thresholdvector}(t)$ given outputs from parallel runs of Algorithm~1

	}

\If{$W({\thresholdvector}^\star) > W({\thresholdvector}(t))$}{          ${\thresholdvector}^\star \leftarrow {\thresholdvector}(t)$} 	
	
\If{threshold vector is not changed for two successive time slots or $T_t < \varepsilon$}{stop}

    \KwOut{$ {\thresholdvector}(t)$, $ {\thresholdvector}^\star$ and ${\bm c}(t)$}
    \caption{Accelerated SA algorithm (at time slot $t$)}
    \label{ASAA}
\end{algorithm}   
   
   In summary, Algorithm~\ref{algo2} is continuously run by the SP, e.g.,  at a fast time scale, and as improvements are found they   are  broadcasted to the locations which locally run  Algorithm~\ref{algo}.   
   In particular, the time scale at which   Algorithm~\ref{algo2} is executed, whose iterations are denoted by $n$,  is decoupled from the  scale of the time slots considered in Algorithm~\ref{algo}, denoted by $t$.   
   The integrated solution involving  Algorithms~\ref{algo} and~\ref{algo2} is presented in Algorithm~\ref{ASAA}, and is described in the sequel.

\subsubsection{SA for T-\JOAC with colorings}

\label{sec:finalalgo}

The accelerated simulated annealing for \TJOAC is obtained by leveraging the colorings produced by Algorithm~\ref{algo2}, integrating them with Algorithm~\ref{algo} as shown in Algorithm~\ref{ASAA}.  As explained previously, the SP continuously broadcasts colorings ${\bm c}^\star$ obtained using Algorithm~\ref{algo2}. At line $5$ of Algorithm~\ref{ASAA}, the SP selects a color $h(t)$ from set of colors $\phi(c)$ currently in use. Then,  Algorithm~\ref{algo} is run locally (in parallel) at locations colored with the same color $h(t)$ (lines $7-10$ in Algorithm~\ref{ASAA}). As in the basic implementation, such locations generate new threshold values which are then combined into a new threshold vector ${\thresholdvector}(t)$ (line $11$  in Algorithm~\ref{ASAA}).  
By locally searching for  optimal thresholds at multiple locations 
 we improve the running time for computing  optimal thresholds, as further evidenced in the evaluation that follows.

 \section{evaluation}\label{sec6}

\subsection{Experimental setup}
 
To validate our theoretical results, we use  vehicular mobility traces of the city of Cologne (Germany), covering a region of 400 km$^2$ in a period of $24$ hours in a typical work day, involving more than $700,000$ individual vehicles. The dataset is available at \cite{cologne}. It comprises the list of users' position records, each record including a sampling timestamp, the user ID, and her position in $(x,y)$ cartesian coordinates. Positions are sampled each second. User mobility is spanned across $230$ (macro) cells and the coverage of each cell is determined a posteriori according to the Voronoi tessellation  shown in Fig.~\ref{fig:voronoi}.  In order to generate the transition probability across cells, we have restricted it to a subset of records corresponding to one hour of trace data. Then, we have resampled our dataset at $2$ second intervals to discretize the process: within such a time step the probability for a user to cross two cells is  bounded below by $0.05$.

\subsection{Aging control analysis}

In the first set of experiments, we have validated our aging control policy on real-world traces. The computation of the optimal policy using the proposed model requires estimating the transitions of the Markov chain $\Gamma$, $U(x)$ and \shadow price $p_l$ of each location $l\in {\mathcal L}$. In our reference setting, the utility of the message decays linearly over time, and remains zero afterward, $U(x)=\max(M-x,0)$. We assume that the device will collect data as soon as the existing data is uploaded. Performance of the policy is evaluated over $67$ epochs corresponding to a total duration of $134$s for a device: at each epoch, a device either uploads or defers based on the multi-threshold  policy obtained from the model.  All $230$ locations are divided into $3$ effective groups, namely range $1$, range $2$, and range $3$, corresponding each to a specific cost $C_1$, $C_2$, and $C_3$, respectively, sorted in  ascending order.  The grouping of locations is based on the congestion level, corresponding to low congestion, medium congestion, and high congestion respectively. The intuition is that the location with a higher cost should be configured with higher \shadow price according to the previous analysis. The results displayed in this section are configured with three prices $(P_1,P_2,P_3)$, $\thresholdrange{1}=0$,  two effective threshold values $(\thresholdrange{2}, \thresholdrange{3})$ and $M$ = 10.  Locations that belong to range $1$, $2$ and $3$ are assigned $P_1=0$, $P_2$ and  $P_3$, respectively.

\begin{figure}[t]
\centering
\includegraphics[scale=2]{./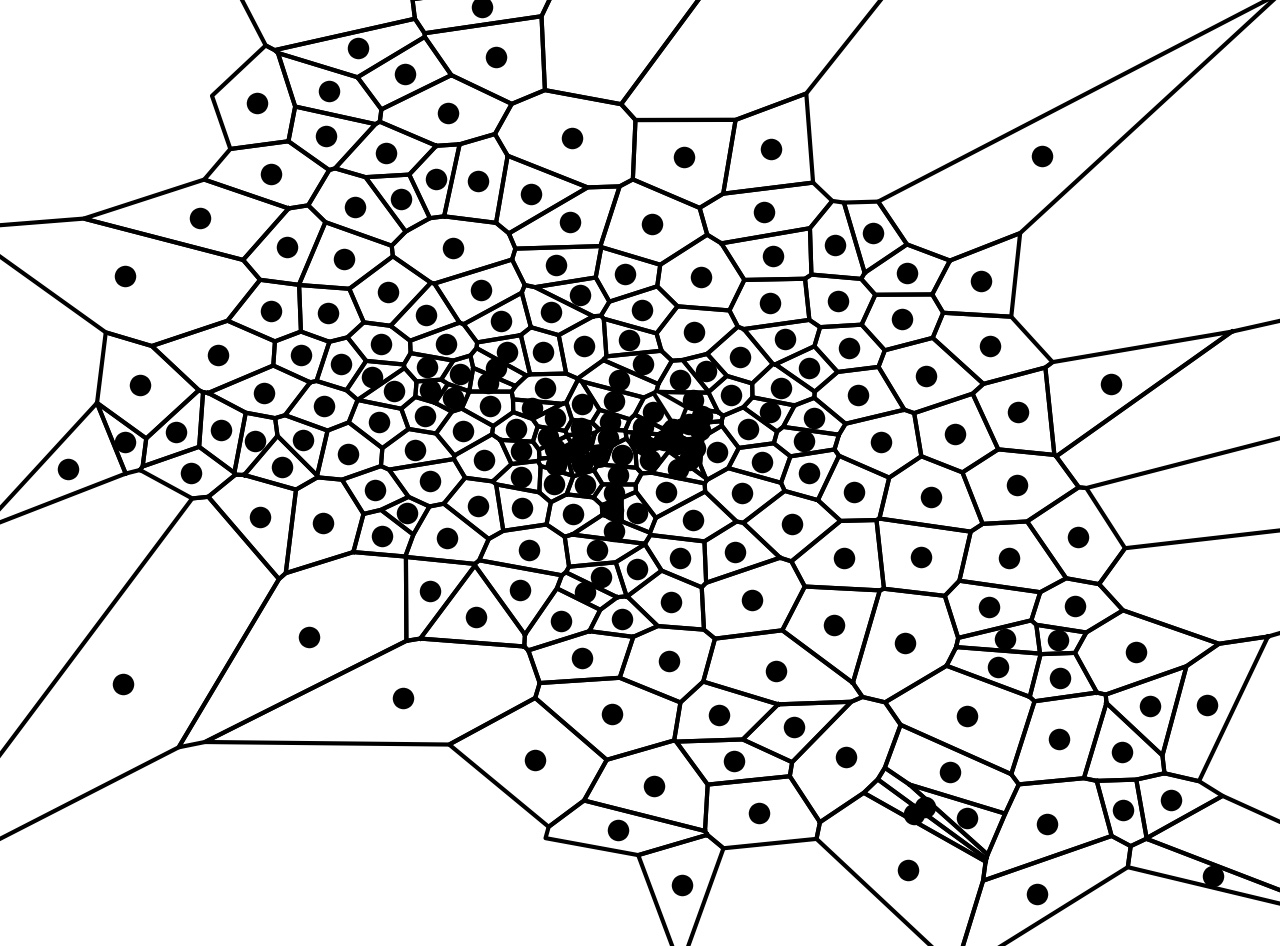}
\caption{Voronoi tessellation for the Cologne mobility trace } 
\label{fig:voronoi}
\end{figure}

Note that we considered the initial price to be zero throughout our model description and numerical experiments.  In addition, in our experiments we set the  maximum price $P_3=9$, and the intermediate price  $P_2 = 6$.  In particular, the values of $P_2$ and $P_3$ are  chosen according to our experimental goals, namely, to illustrate a threshold policy which is not degenerate to either always transmit or never transmit -- we set $P_2$ and $P_3$  with enough separation     to illustrate their  impact, while remaining  in the same order of magnitude.

Since the transition matrix $\Gamma$ is derived from traces, it is interesting to compare how the optimal policy obtained by our model compares against its alternatives. In particular, note that the mobility in the traces is neither stationary nor memoryless. Therefore, one of our goals is to assess to what extent our results still hold if those assumptions are removed. 

In Fig. \ref{fig:price}(a) we compared 1) the theoretical optimal reward predicted by our model -- i.e., using the empirical transition matrix $\Gamma$, 2) the average reward obtained simulating data collection and upload using the original real traces under the optimal policy predicted by our model and, finally, 3) the optimal reward obtained by using the multi-threshold policy calculated by exhaustive search on the real traces.  The match appears quite tight and also rather insensitive to the variation of $M$.

\begin{figure*}[t]
\centering
\includegraphics[scale=0.35]{./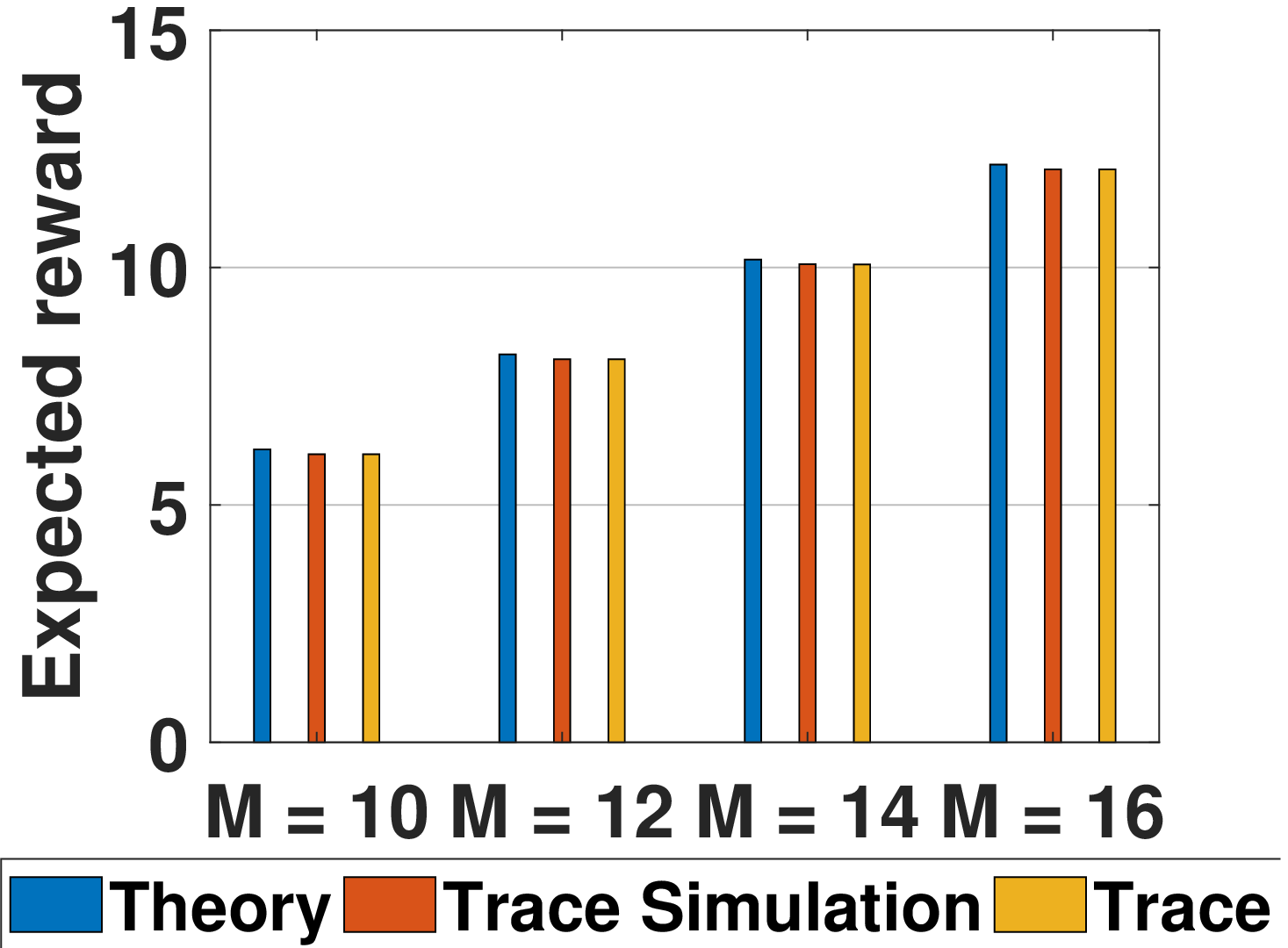} 
\includegraphics[scale=0.35]{./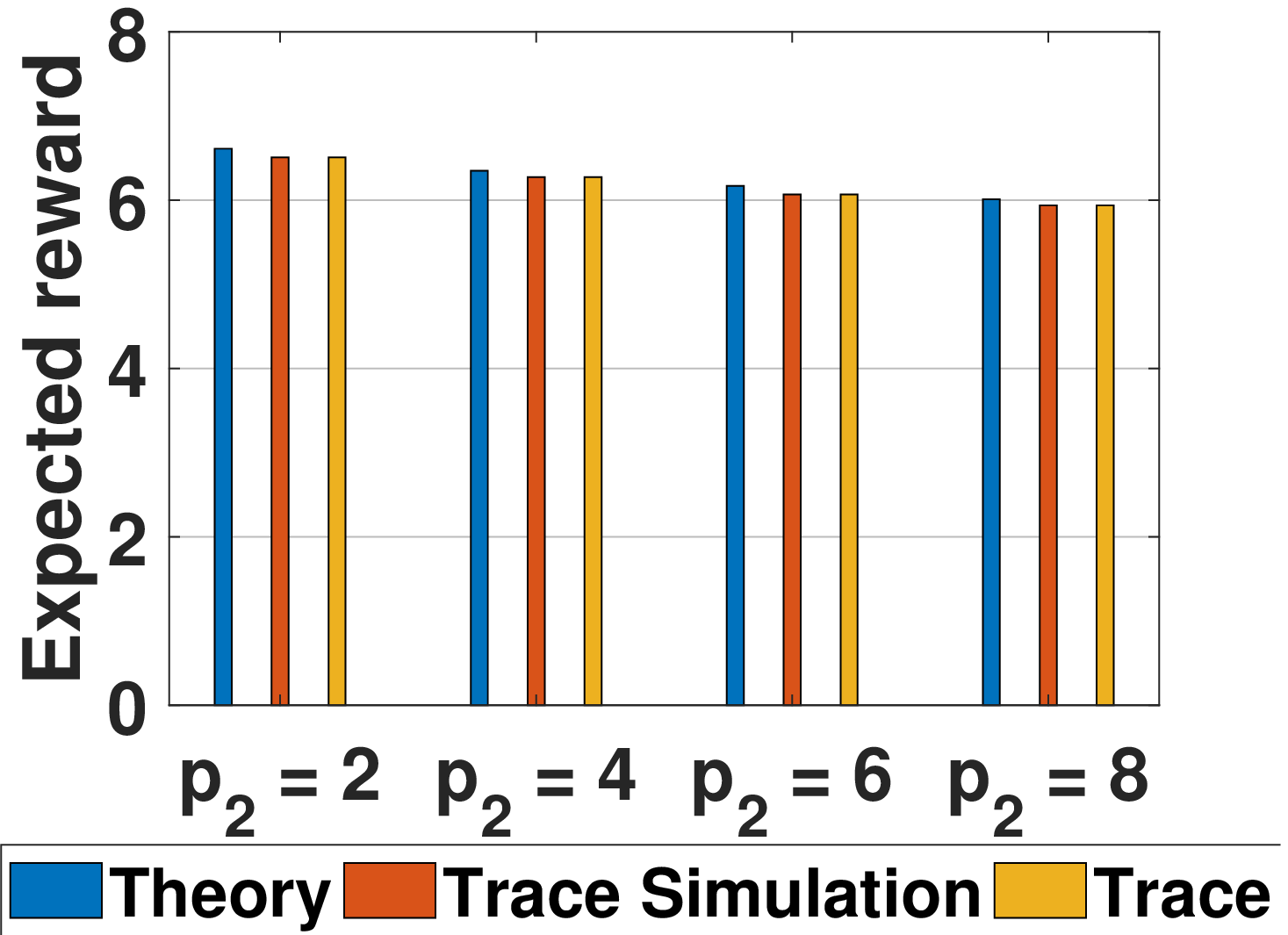} 
\includegraphics[scale=0.35]{./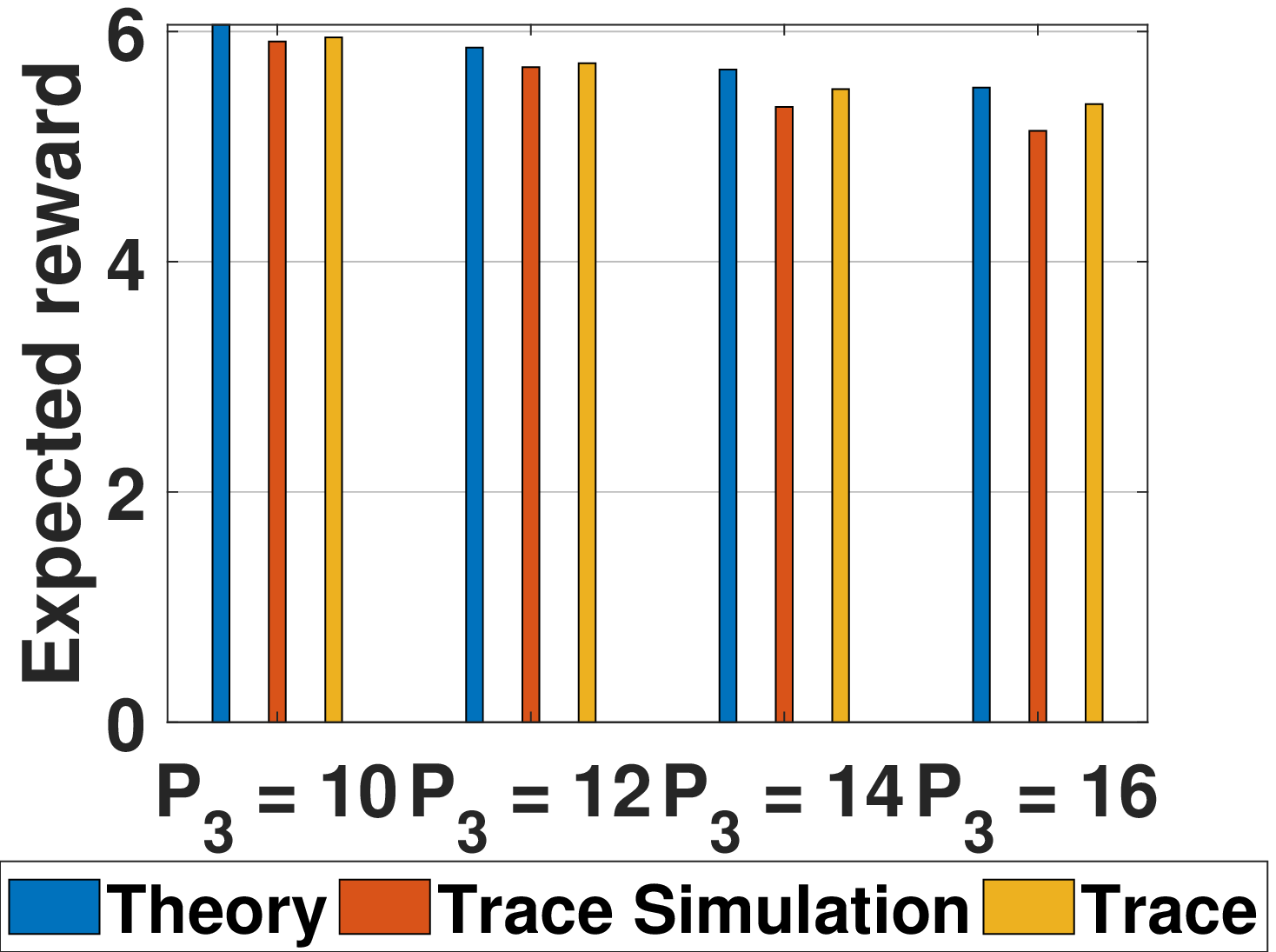}
\caption{a) Expected average rewards for i. theoretical model, ii. simulation and iii. exhaustive search; $P_1 = 0$, $P_2 = 6$ and $P_3=9$. b) Effect of \shadow price $P_2$ on the average reward; $P_1 =0$ and $P_3 = 9$; $M=10$. c) Effect of \shadow price $P_3$ on the average reward  $P_1 =0$ and $P_2 = 6$; $M=10$. }\label{fig:price}
\end{figure*}

\subsection{Offloading under unconstrained aging control}

We now explore how \shadow prices can be used to affect the freshness of information delivered by each device. To this aim, Fig.~\ref{fig:price}(b) and \ref{fig:price}(c) illustrate how the average reward changes as function of the \shadow price. Fig.~\ref{fig:price}(b) is obtained by fixing $P_1$ and $P_3$  and varying $P_2$ from $2$ to $8$. It shows that the average reward decreases with price $P_2$.  Indeed, as $P_2$ becomes larger, the device has more incentive to upload the collected data to locations in range $1$. It is possible that the age of collected data becomes higher -- i.e. upload occurs farther from the origin site -- which explains why the average reward decreases  with $P_2$. We observe the same behavior by changing the price $P_3$ and making $P_1$ and $P_2$ fixed, as depicted in Fig.~\ref{fig:price}(c). In summary, if we ignore QoS constraints, the SP should indeed increase the \shadow price for locations in range $2$ and range $3$: the effect is to shift all collected traffic to locations in range $1$, which are less expensive to lease. 

\subsection{Offloading under aging control with AoI constraints}

Fig.~\ref{fig:load}(a) shows the relative volume of traffic uploaded in range $1$, $2$, and $3$ under optimal pricing for increasing values of the AoI constraint $d$. From these simulation results, we can get some useful insights on the load balancing operated by the proposed \shadow pricing scheme. The first point on the x-axis corresponds to the profile of traffic obtained under uniform flat \shadow price, that is No Strategy (NS), meaning that all IoT traffic is uploaded where it is produced. The remaining points correspond to the optimal \shadow prices for different values of $d$.  First, we observe that it is possible to shift an important  part of traffic ($13$\%-$32$\%) towards locations in range $1$ (which corresponds to price $P_1=0$). A smaller part of traffic  ($1$\%-$12$\%) is instead shifted to locations in range $3$. When $d$ increases, we observe that more traffic is shifted since devices have more opportunities to upload collected data in locations in range $1$.

Fig.~\ref{fig:load}(b) shows how the \shadow price impacts the cost incurred by the SP. A larger value of $d$ for data generated at a given location means lower sensitivity to delay, which in turn increases the probability  to upload at locations in range $1$, which explains why the total cost is ultimately decreasing with the QoS constraint $d$ (total cost reduced by approximately 50\% when $d=18$).


\begin{figure}[h]
\centering
\includegraphics[scale=0.26]{./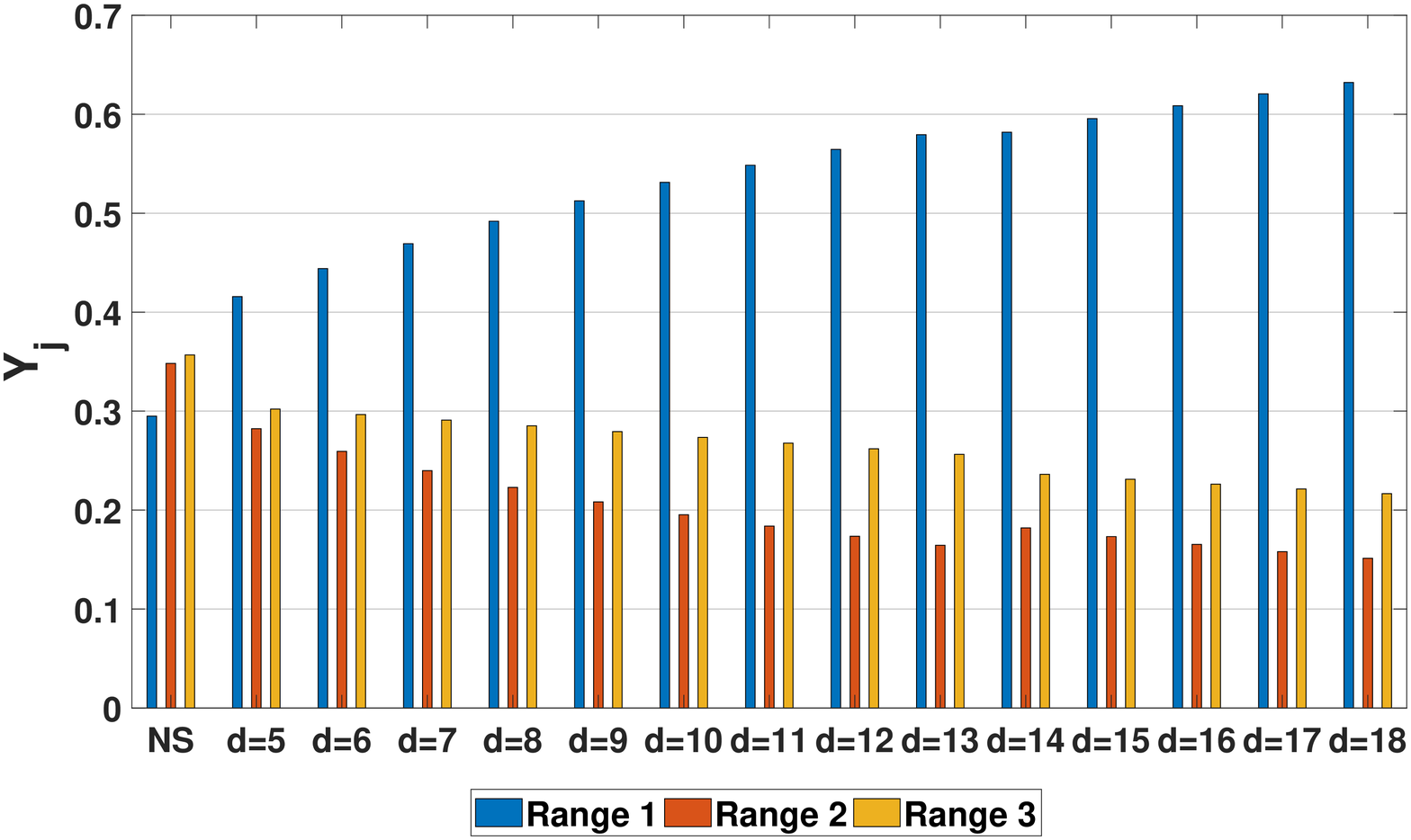}
\includegraphics[scale=0.26]{./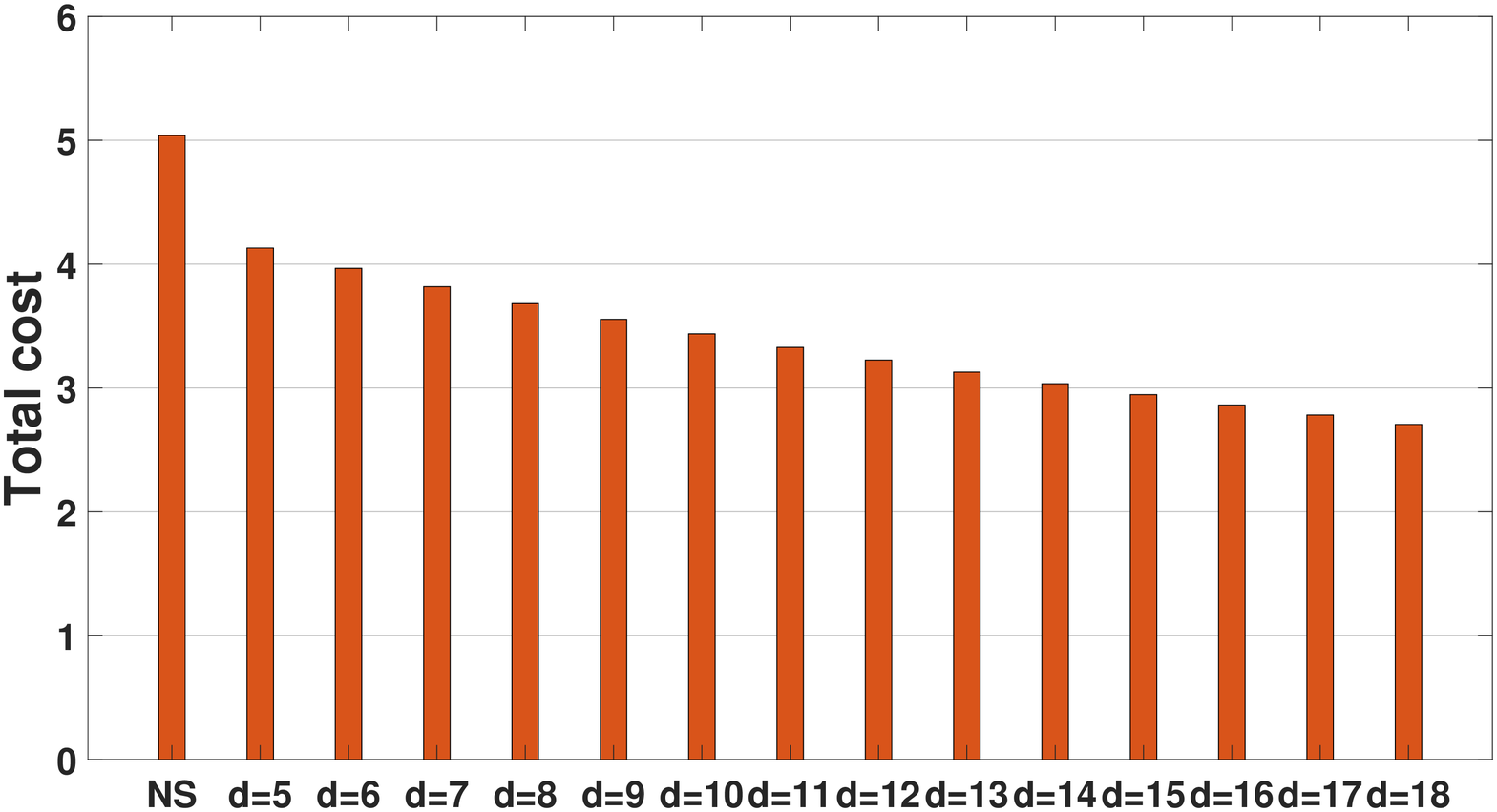}
\caption{Load balancing: a)  effect of \shadow pricing for various values of $d$; $\epsilon = 0.01$ b) cost incurred by the MSP for various values of $d$.} \label{fig:load}
\end{figure}

\subsection{Leveraging coloring for joint offloading and aging control}

Results concerning location coloring are depicted in Fig.~\ref{coloring}, where, we considered $d = 7$, $\epsilon$ = 0.01 (factor which determines the neighborhood of locations in the SA algorithm) and used the same cell topology as mentioned for earlier numerical results. Fig.~\ref{coloring} shows the convergence of the minimum number of colors for a region of $20$ locations and $230$ locations. Different temperatures are considered to test the convergence of the algorithm.  The results obtained using Algorithm 2 are compared to the integer linear programming ILP (branch-and-bound)  \cite{HANSEN} which is more conventional and can be considered as a baseline solution for coloring algorithms. As shown in Fig.~\ref{coloring}(a) SA algorithm has optimal results for the region with less locations. However, when tested on the region with full 230 locations, the SA algorithm lags behind the ILP as depicted in Fig.~\ref{coloring}(b). For the purposes of traffic offloading, the near-optimal solution to the coloring problem provided   by the SA algorithm is  acceptable as shown in the next set of experiments.

For the results shown in Fig. \ref{fig:convergence}, we used a $20$ macro cell topology; with no loss of generality, we have normalized the traffic generation as $\frac{NF}{T} = 1$.  We consider $d= 7$, $\epsilon = 0.01$, and the maximum threshold $t_{max} = d + 3$. The temperature used to control the number of iteration was derived by trying different configurations; the basic trade-off is to perform a number of iterations large enough for the algorithm to converge to the minimum value and yet bound it to a maximum value for the sake of computation time. We discovered by numerical exploration that use of $T_t = \hat{a}/\log(1 + t)$ or $T_t = \hat{a}/t^{2.8}$, where $\hat{a} = 10^6$, produces the same optimum value where the latter has faster convergence. Hence, we used the temperature setting that is best suited for the algorithm. Fig.~\ref{fig:convergence} shows the difference in the convergence of average cost using Algo.~\ref{algo} (without coloring) and Algo.~\ref{ASAA} (with coloring). As it can be observed, the proposed coloring method significantly speeds up the convergence compared to the case where the coloring method is not deployed. The gain in performance holds for all the three considered values of $d$ (Fig.~\ref{fig:convergence}:1-3). The improvement in  convergence time with respect to the SA without coloring is about 50\% , namely two fold improvement.

\begin{figure}[t]
\centering
\includegraphics[scale=0.28]{./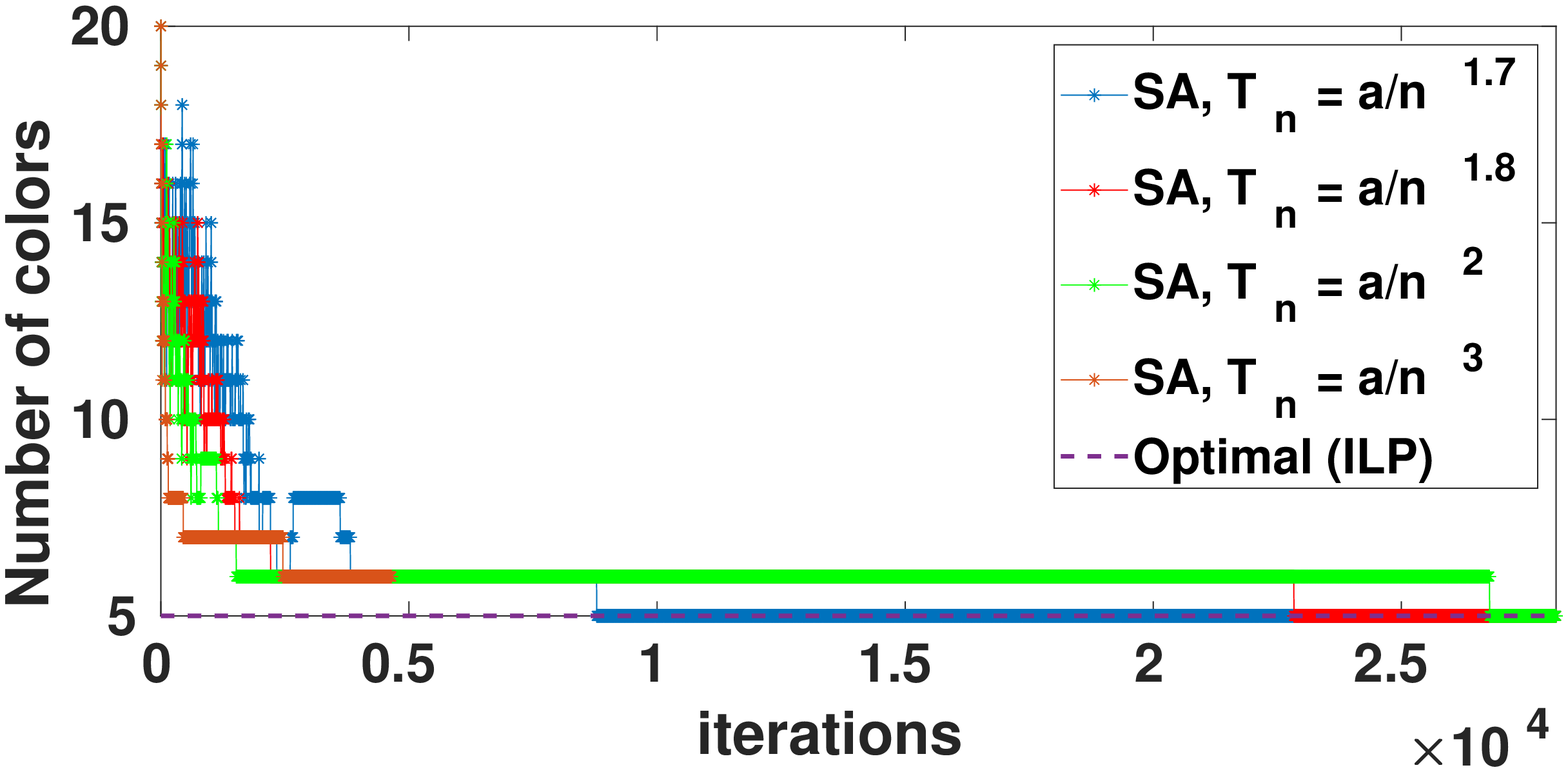} 
\includegraphics[scale=0.28]{./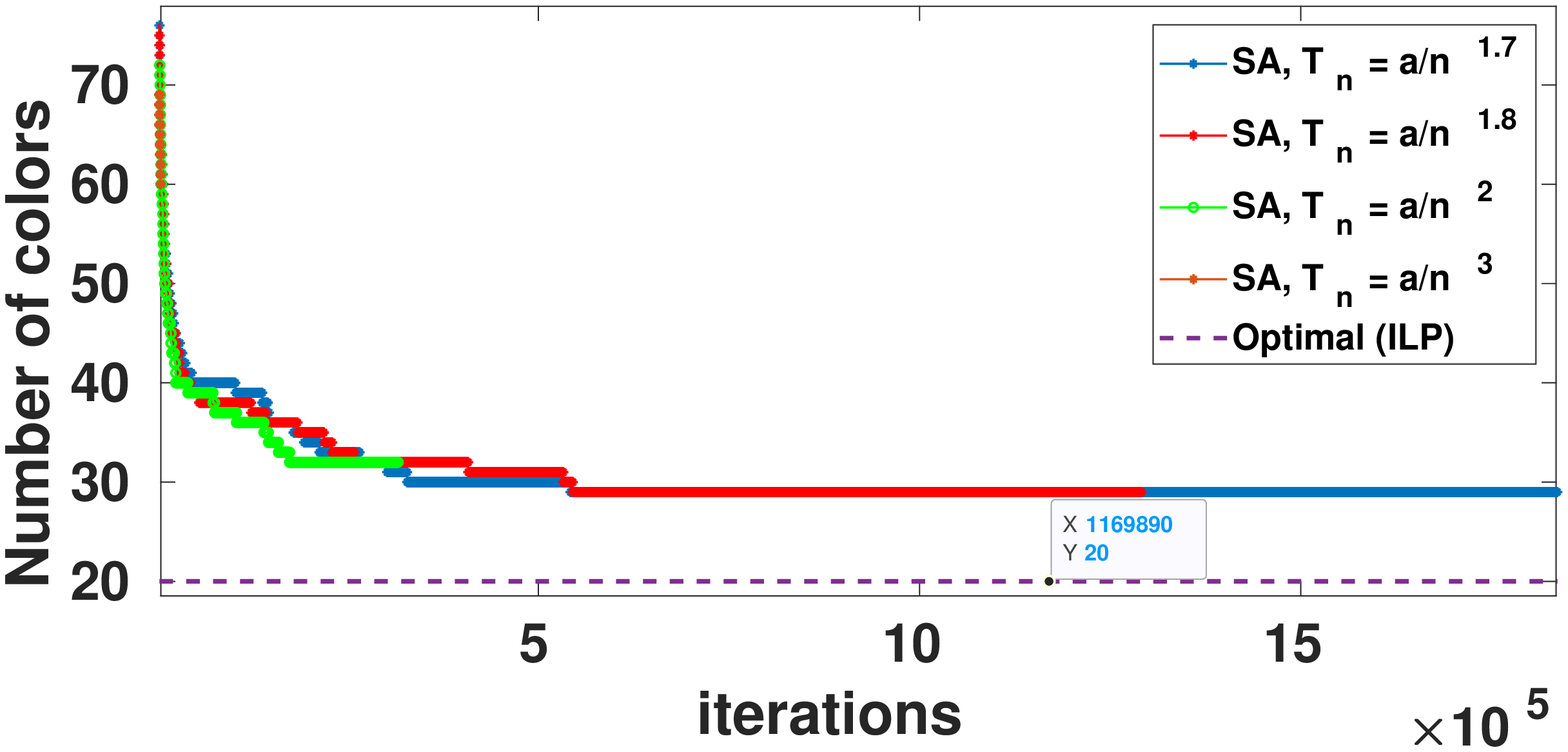}
\caption{Convergence SA for coloring of a)20 locations b)230 locations.}\label{coloring}
\end{figure}

\begin{figure}[t]
\centering
\includegraphics[scale=0.18]{./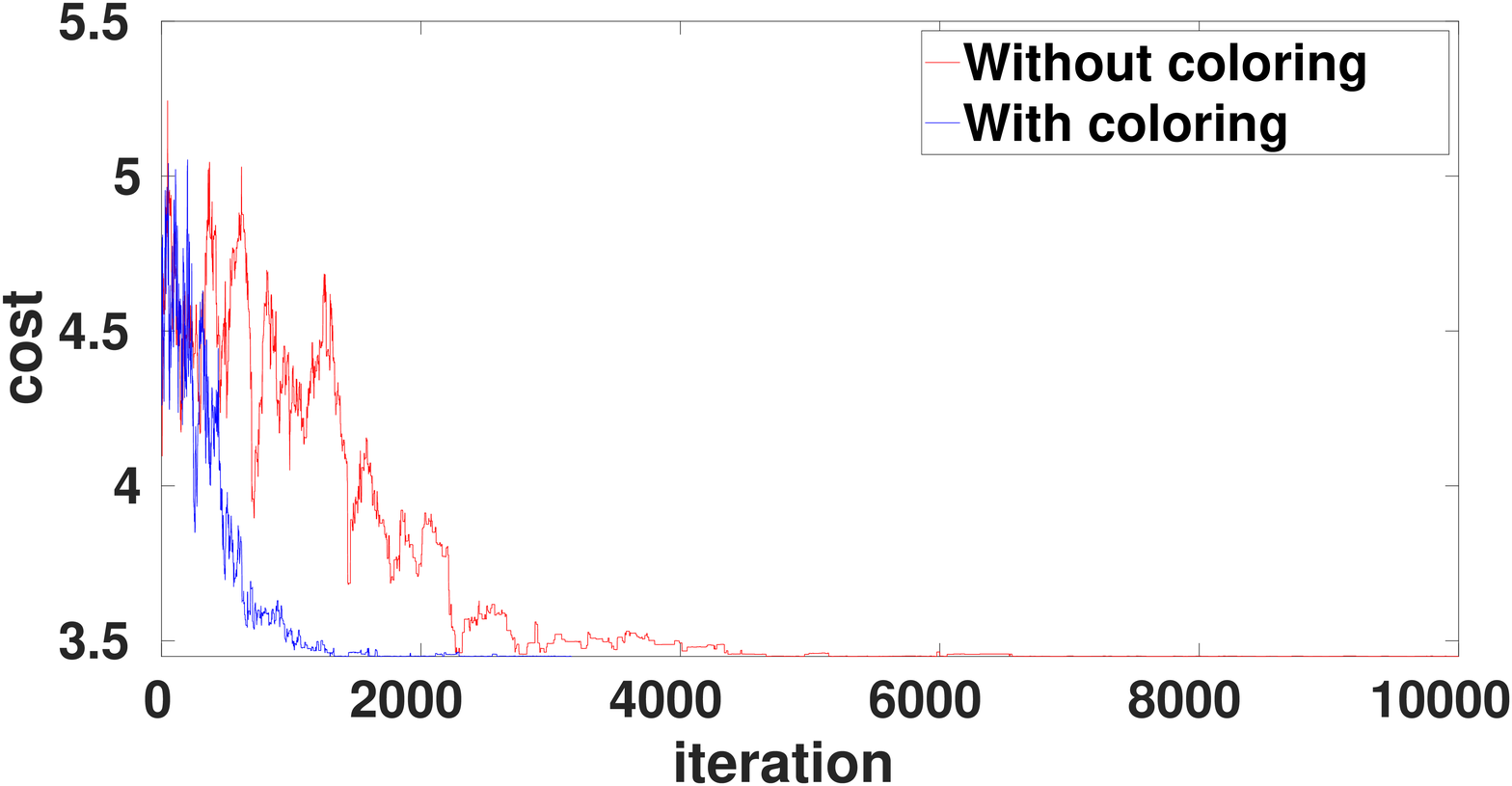}
\includegraphics[scale=0.18]{./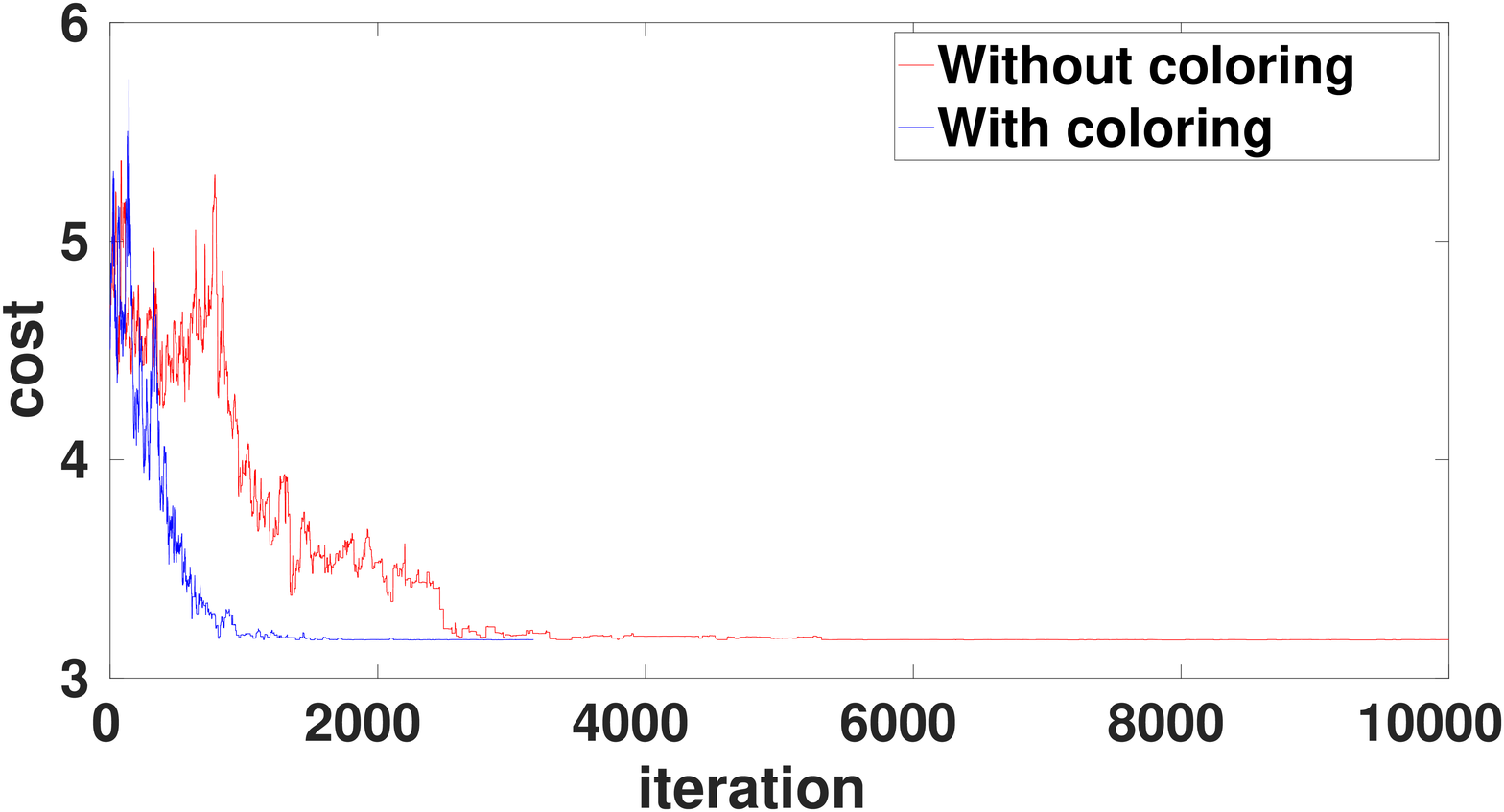} 
\includegraphics[scale=0.18]{./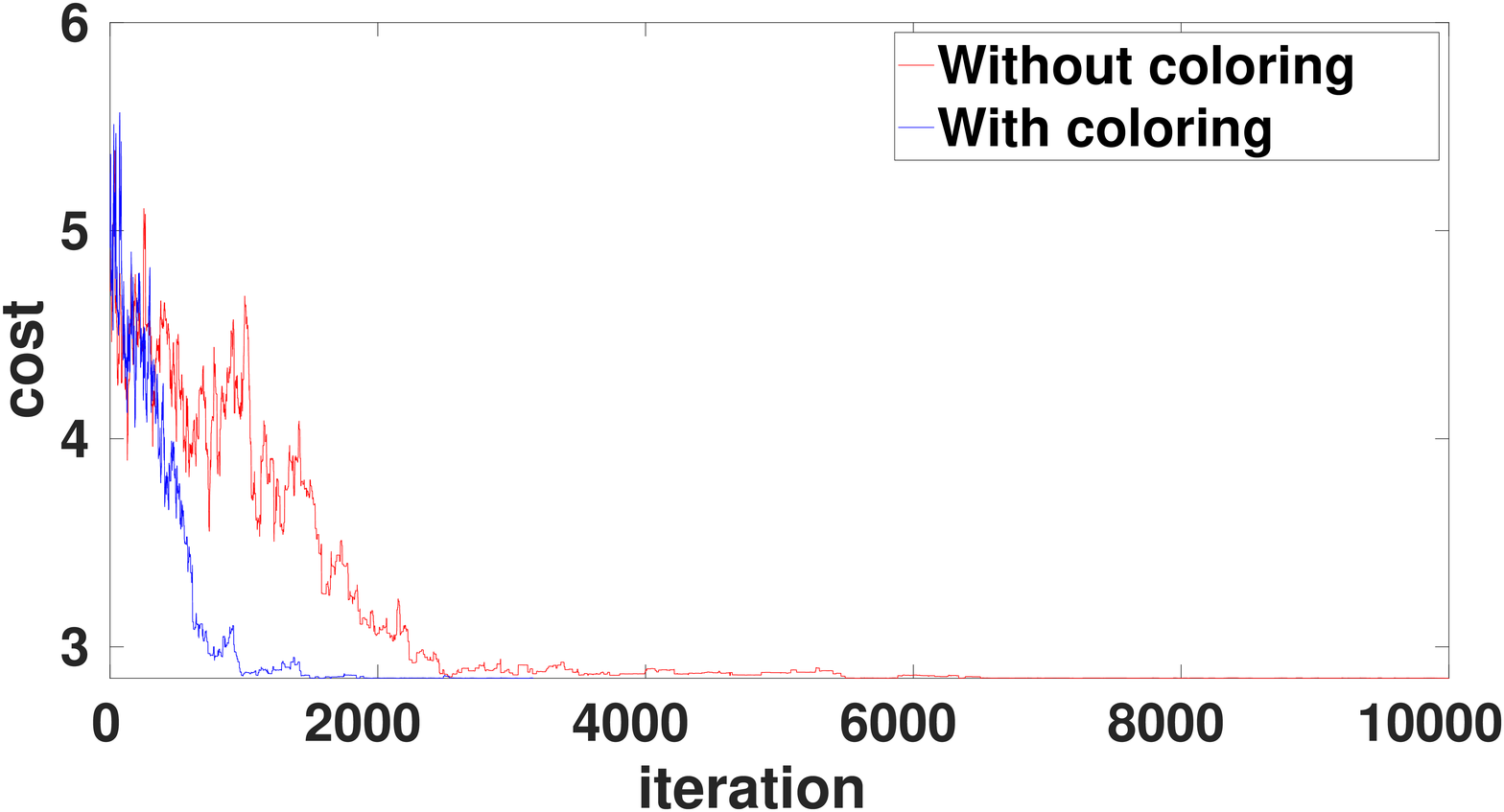} 
\caption {Comparing convergence of two algorithms for 1) $d = 7$ 2) $d = 9 $ 3) $d = 12$.} \label{fig:convergence}
\end{figure}

\section{Related work}\label{sec7}

Most related works tackle either the control of AoI in IoT networks or traffic offloading for 5G networks. \\
{\em Aging control for IoT}.  Aging control is at the core of IoT sensing applications, as it captures the trade-off between data staleness and resources 
utilization. Given the increasing demand  for IoT systems, the literature on control of AoI is correspondingly  growing. Most of the work on information 
aging control focus on users' standpoint, accounting for costs as perceived by the devices whose AoI is under control. Connections of AoI with traffic offloading are typically analyzed in the literature as a downstream effect of aging control. The potential relationship between AoI and traffic offloading has   been signaled in~\cite{altman2019forever}. With, previous works mostly focusing  on computation and task offloading~\cite{li2020age,liu2020joint,song2019age} rather than traffic offloading. In this paper, in contrast, we have considered jointly aging control and traffic offloading as first class citizens of an ecosystem wherein users and providers interact. In our scheme SPs influence users via pricing mechanisms able to couple aging control and traffic offloading in a unified framework.\\
{\em  Traffic offloading and slicing in 5G networks}. Utility service providers have long performed IoT data collection to reduce operational costs. Such traditional schemes are typically based on M2M to match the requirements of proprietary SCADA systems and charged per message. Nowadays, they appear inadequate for emerging IoT systems. In fact, the second major driver of the 5G technology, beyond multimedia traffic, is the current growth of mobile IoT connections \cite{cisco2020}. Actually, with both new LTE-M radio interface and the new suite of architectural paradigms, 5G interfaces key infrastructural assets able to ease both IoT access to radio resources and computing at the edge of the network. \\ Traffic offloading is enabled by 5G technology through slicing. Technical aspects such as slice insulation and fair slice allocation are still under development to upgrade LTE technology towards 5G, with large effort by the research community to overcome such technical issues \cite{Sciancalepore2017}\cite{WongTON2017}\cite{ZhengDeVeciana2018}. Nevertheless, slicing techniques are currently under standardization: specifications of the 5G system's slicing architecture and its requirements are  available \cite{3GPPPSlices}. In future 5G networks virtual private networks for IoT data collection will be shipped to SPs on top of the existing mobile network infrastructure with InP dedicated customer support. The traffic offloading mechanisms proposed in this work can be used by any such SP at the slice level for cost minimization purposes.

\section{conclusion}
Future IoT service providers will need ubiquitous IoT data collection, mandating in turn the support of IoT access 
at scale over the 5G infrastructure. At the same time, new schemes to control data generation and upload should allow to SPs to perform 
IoT data brokerage across diverse access resources made available by concurrent infrastructure providers 
at different costs, in the form of 5G IoT resource slices. 

This paper introduces a new framework to connect two fundamental aspects: the AoI of IoT data to be uploaded  
and the cost of 5G resources leased in order to obtain network access services. The upload control can be performed 
in a distributed way at the device level using optimal dynamic multi-threshold policies. Such policies have been showed to 
outperform their static counterparts. At same time, a SP can control \shadow prices to match optimal multi-threshold 
policies to service requirements while minimizing operational costs. It does so at the slice level by incentivizing users
 to perform IoT data uploads where resources leased from InPs are cheaper. This work opens new directions at the bridge between IoT and 5G research, by describing on a quantitative basis how to trade-off IoT data freshness and load balancing, as supported by the 5G slicing paradigm. In particular, we envision real testbed deployments and the investigation of  strategic mobility patterns to reduce costs as interesting areas for future exploration.

\bibliographystyle{IEEEtran}
\bibliography{literature,ageofinfo}
\appendix
\subsection{Proof of Theorem~\ref{THRESHOLD}} \label{ap:prooftheor1}

Next, we present the  proof of Thm.\ref{THRESHOLD}. 
 We begin by noting that from \cite{Puterman}  there exist a value function $V(x,l)$ and a scalar  $\rho$ satisfying the Bellman equation for the average cost MDP problem
\begin{align}
&  V(x,l)+\rho=\max \Big( U(x)-p(l) +  \sum_{l'\in \mathcal{L}} {\lambda}_{ll'} V(1,l'),  \nonumber \\ 
 &\qquad  U(x)  +  \sum_{l'\in \L} {\lambda}_{l{l{'}}} V(\min(x+1,M),l') \Big)  \label{eq:bellman}
\end{align}

An optimal policy $\mu$ able to select the per-state action maximising the right hand side of \eqref{eq:bellman} is an optimal solution to \eqref{eq:avcost}. 
Moreover, it is known that an unconstrained MDP admits a deterministic optimal policy \cite{Puterman}. Since a multi-threshold strategy belongs to this class of policies,  we restrict our discussion to the case of deterministic policies.

In what follows, we consider locations  sorted by increasing \shadow price order, that is $p(l) \leq p(l+1)$, for $l=1,\ldots,L$.  
Let us define the function $H : {\mathcal{M}} \times {\mathcal{L}} \times \{0, 1\}\rightarrow \mathbb R$ as follows 
\begin{align}
	H(x,l,1) &= U(x)-p(l)+\sum_{l'\in \mathcal L} {\lambda}_{ll'}V(1,l')\label{H1}\\
	H(x,l,0) &= U(x)+\sum_{l'\in \mathcal L} {\lambda}_{ll'}V(x+1,l')\label{H0}\\
	\Delta H(x,l) &= H(x,l,1) - H(x,l,0)
\end{align}

Hereafter we shall demonstrate that $i)$ the value function is decreasing in the age of information for any given location, 
$ii)$  that the optimal policy for any given location switches from $0$ to $1$ at most once and finally $iii)$ that if uploading is optimal for a certain value of the age of information at a given price, it is also optimal for larger prices as well. Such facts are proved formally in the following  lemma. 
\begin{lemma}\label{lem:monotone}
For any optimal policy, for $x=2,\ldots,M$ the following facts hold: \\
i. $V(x-1,l) \geq V(x,l),\; \forall  l\in{\mathcal L}$.\\
ii. $\Delta H(x-1,l) \geq 0 \Rightarrow \Delta H(x,l) \geq 0,\; \forall  l\in{\mathcal L}$.\\
iii. $\Delta H(x,l-1) \geq 0 \Rightarrow \Delta H(x,l)   \geq 0,\; \forall  l\in{\mathcal L}$.\\
iv. $V(x,l-1) \geq V(x,l),\; \forall  l\in{\mathcal L}.$\\
\end{lemma}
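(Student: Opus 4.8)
The plan is to prove the four claims of Lemma~\ref{lem:monotone} jointly by induction on the iterations of the value-iteration algorithm, exploiting the fact that the average-cost value function is the limit of value-iteration iterates $V_n$ (up to a constant), so it suffices to show each property is preserved under the Bellman operator. I would set $V_0 \equiv 0$, note that all four properties hold trivially at $n=0$, and then assume $V_n$ satisfies (i)--(iv) and show $V_{n+1}$ does as well. Throughout I would keep the convention that locations are sorted by increasing price, $p(l)\leq p(l+1)$, and write the update as $V_{n+1}(x,l)=\max\bigl(H_n(x,l,1),H_n(x,l,0)\bigr)$ with $H_n$ defined as in~\eqref{H1}--\eqref{H0} but with $V_n$ in place of $V$.

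\textbf{Claim (i): monotonicity in age.} Here I would first observe that $H_n(x,l,1)$ does not depend on $x$ except through the non-increasing utility $U(x)$, so $H_n(x-1,l,1)\geq H_n(x,l,1)$. For $H_n(x,l,0)$ the argument uses $U(x-1)\geq U(x)$ together with the inductive hypothesis $V_n(x,l')\geq V_n(x+1,l')$ applied termwise in $\sum_{l'}\lambda_{ll'}(\cdot)$, plus the $\min(x+1,M)$ clamping, which only helps. Taking the maximum of two coordinatewise-dominated quantities preserves the inequality, giving $V_{n+1}(x-1,l)\geq V_{n+1}(x,l)$.

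\textbf{Claims (ii) and (iii): the ``upload region is up-closed''.} I would write $\Delta H_n(x,l)=H_n(x,l,1)-H_n(x,l,0)=-p(l)+\sum_{l'}\lambda_{ll'}\bigl(V_n(1,l')-V_n(\min(x+1,M),l')\bigr)$. For (ii), as $x$ increases the term $V_n(1,l')-V_n(x+1,l')$ is nondecreasing in $x$ by claim (i) (each $V_n(x+1,l')$ shrinks), so $\Delta H_n(x,l)$ is nondecreasing in $x$; hence once it is nonnegative it stays nonnegative. For (iii), increasing $l$ decreases $-p(l)$ but the sum $\sum_{l'}\lambda_{ll'}(\cdot)$ changes its weights; here I need $V_n(1,l')-V_n(x+1,l')\geq 0$ (claim (i)), so it is a sum of nonnegative quantities, and I would use claim (iv) to control how the weights shift — this is the delicate coupling, and I expect this to be the main obstacle. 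The standard device is to note that we only need $\Delta H(x,l-1)\geq 0 \Rightarrow \Delta H(x,l)\geq 0$, and to bound $\sum_{l'}(\lambda_{l,l'}-\lambda_{l-1,l'})V_n(1,l') \geq \sum_{l'}(\lambda_{l,l'}-\lambda_{l-1,l'})V_n(x+1,l')$, which would follow from claim (i) if the mobility matrix were monotone; absent a monotonicity assumption on $\Lambda$, I would instead need to argue that the price gap $p(l)-p(l-1)\ge 0$ dominates, i.e. that the term $-(p(l)-p(l-1))$ is the only $l$-dependent change after cancellation — which works precisely because $H(x,l,0)$'s continuation term $\sum_{l'}\lambda_{ll'}V_n(x+1,l')$ and $H(x,l,1)$'s continuation term $\sum_{l'}\lambda_{ll'}V_n(1,l')$ combine so that the $l$-dependence of $\Delta H$ is $-p(l)+\sum_{l'}\lambda_{ll'}g(l')$ with $g(l')=V_n(1,l')-V_n(x+1,l')\ge 0$, and $\Delta H(x,l)-\Delta H(x,l-1) = -(p(l)-p(l-1)) + \sum_{l'}(\lambda_{ll'}-\lambda_{l-1,l'})g(l')$; I would need to either invoke an assumed stochastic-monotonicity property of $\Lambda$ or re-examine whether the intended statement implicitly assumes it.

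\textbf{Claim (iv): monotonicity in location.} This follows the same template as (i): $H_n(x,l-1,1)\geq H_n(x,l,1)$ because $-p(l-1)\geq -p(l)$ and the continuation term $\sum_{l'}\lambda_{l-1,l'}V_n(1,l')$ versus $\sum_{l'}\lambda_{l,l'}V_n(1,l')$ again requires controlling the weight shift (same monotone-$\Lambda$ issue as in (iii)); and $H_n(x,l-1,0)\geq H_n(x,l,0)$ similarly. Then the maximum preserves the inequality, so $V_{n+1}(x,l-1)\geq V_{n+1}(x,l)$. Finally I would pass to the limit: $V_n - V_n(\text{reference state}) \to V$ and $\rho$ emerges as the per-step growth, and since all four inequalities are closed under limits they hold for the optimal $V$; the characterization of the optimal policy as $\mu(x,l)=\mathbbm 1\{\Delta H(x,l)\geq 0\}$ together with (ii)--(iii) then gives the multi-threshold structure claimed in Theorem~\ref{THRESHOLD}. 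I would flag in the write-up that the whole argument hinges on a monotonicity/ordering property of the mobility matrix relative to the price ordering, and make that hypothesis explicit if it is not already.
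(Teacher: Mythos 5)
Your treatment of items (i) and (ii) is essentially sound and close in spirit to the paper. For (i), the paper works directly on the Bellman equation, defining $Z_l(x)=V(x,l)-U(x)+\rho$ and doing a backward induction on $x$ starting from $Z_l(M-1)=Z_l(M)$, whereas you push the same monotonicity through value-iteration iterates and pass to the limit; that route is fine provided you are careful that for the average-reward criterion it is relative value iteration (iterates recentered at a reference state) that converges, since $V_n$ itself grows like $n\rho$. Your (ii) coincides with the paper's one-line computation $\Delta H(x-1,l)-\Delta H(x,l)=\sum_{l'}\lambda_{ll'}\bigl(V(x+1,l')-V(x,l')\bigr)\le 0$.

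The genuine gap is that you never actually prove (iii) and (iv): you isolate the cross-location term $\sum_{l'}(\lambda_{ll'}-\lambda_{l-1,l'})\,g(l')$ with $g(l')=V(1,l')-V(x+1,l')\ge 0$, and then state that you would need a stochastic-monotonicity hypothesis on $\Lambda$ (or an explicit extra assumption) to control it — but the lemma is stated for an arbitrary ergodic mobility chain with locations ordered only by price, so adding such a hypothesis changes the statement rather than proving it. For the record, the paper closes this step differently: for (iii) it writes $\Delta H(x,l-1)-\Delta H(x,l)=p(l)-p(l-1)+\sum_{l'}(\lambda_{(l-1)l'}-\lambda_{ll'})\bigl(V(1,l')-V(x+1,l')\bigr)$ and lower-bounds the sum by $\tilde{\kappa}\sum_{l'}(\lambda_{(l-1)l'}-\lambda_{ll'})=0$ with $\tilde{\kappa}=\sup_{l'}\{V(1,l')-V(x+1,l')\}$, and it declares (iv) immediate from $p(l+1)\ge p(l)$. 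So the coupling you flagged is precisely the step the paper compresses; your unease is not unfounded, since replacing $g(l')$ by its supremum in a sum whose coefficients have mixed signs does not by itself give a lower bound, but as a blind proof of this lemma your write-up stops exactly where the decisive argument is required. Note also a direction issue you inherited from the printed statement: the paper's computation establishes $\Delta H(x,l-1)\ge\Delta H(x,l)$ (uploading is at least as attractive at the cheaper location), which is what the threshold ordering $\thresholdrange{1}\le\cdots\le\thresholdrange{K}$ in Theorem~\ref{THRESHOLD} actually uses, whereas the implication as printed in item (iii) — and the inequality $\Delta H(x,l)-\Delta H(x,l-1)\ge 0$ you set out to establish — points the opposite way, so pursuing your stated target would have you fighting the price gap $p(l)-p(l-1)\ge 0$ rather than being helped by it.
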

\begin{proof} We show each of the four items above in the corresponding order.

 $i.$ We can verify the result directly by backward induction on \eqref{eq:bellman}. For a deterministic policy, we define 
\begin{align}\label{eq:bellman2}
 Z_l(x) &:= V(x,l) - U(x)+\rho = \\
& =\max \Big( -p(l) + \sum_{l'\in \mathcal L} {\lambda}_{ll'}V(1,l'),\nonumber\\
& \quad\sum_{l'\in \mathcal L} {\lambda}_{ll'}V(\min(x+1,M),l') \Big)
\end{align}
We shall prove that $Z_l(x) \geq Z_l(x+1)$. This implies $V(x,l) \geq V(x+1,l)$ since $V(x,l) - U(x) \geq V(x+1,l) - U(x+1) \geq V(x+1,l) - U(x)$, where the last step holds because $U$ is non increasing. First, we observe that
\begin{align}
&Z(M-1)= \\
& \max \Big (-p(l) + \sum_{l'\in \mathcal L} {\lambda}_{ll'}V(1,l') , \sum_{l'\in \mathcal L} {\lambda}_{ll'}V(M,l') \Big )=Z_l(M)\nonumber 
\end{align}
so that the inductive basis holds true.

Now, in the general case we can observe that if the statement is true for $x+1$, that is $Z_l(x+1,l) \geq Z_l(x+2,l)$, it needs to hold for $x$ as well. Using the induction hypothesis, we have  $V(x+1,l) \geq V(x+2,l)$ and  thus 
\begin{eqnarray}\label{eq:zeta}
&&\hskip-7mm Z_l(x)=\max (-p(l) + \sum_{l'\in \mathcal L} {\lambda}_{ll'}V(1,l'),\sum_{l'\in \mathcal L} {\lambda}_{ll'}V(x+1,l'))\nonumber \\
&&\hskip-6mm \geq\max (-p(l) + \sum_{l'\in \mathcal L} {\lambda}_{ll'}V(1,l') , \sum_{l'\in \mathcal L} {\lambda}_{ll'}V(x+2,l'))\label{Vdec}\\
&&\hskip-6mm =Z_l(x+1)
\end{eqnarray}
which concludes the inductive step. \\

$ ii.$ It is sufficient to write $\Delta H(x-1,l) - \Delta H(x,l) =  \sum_{l'\in {\Loc}} \lambda_{ll'} ( V(x+1,l') - V(x,l') )\leq 0$. \\

$ iii.$  In this case, we can directly verify
\begin{align}
&\Delta H(x,l-1) - \Delta H(x,l) = -p(l-1) + p(l) \nonumber  \\ 
& \quad + \sum_{l'\in {\Loc}} (  \lambda_{(l-1)l'} - \lambda_{ll'} )( V(1,l') - V(x+1,l')) \nonumber   \\ 
&\quad\geq  \tilde{\kappa} \sum_{l'\in {\Loc}}(  \lambda_{(l-1)l'} - \lambda_{ll'} ) = 0 \nonumber  
\end{align}
where  
\begin{equation}
\tilde{\kappa}  = \sup_ {l'\in {\Loc}} \{V(1,l') - V(x+1,l')\}.
\end{equation}

$iv.$ Immediate since $p(l+1)\geq p(l)$.
\end{proof}

In what follows, we complete
the proof of Theorem \ref{THRESHOLD}.
\begin{proof} 
The proof of the  multi-threshold structure is a consequence of Lemma~\ref{lem:monotone}. In particular, let us define 
\[
\thresholdrange{l}:=\max\{x| \Delta H(x,l) < 0 \}, \; l=1,\ldots,L.
\]
so that in location $l$ it is optimal to upload for $x \geq \thresholdrange{l}$ for all prices $p \leq P_l$; also, from $ii.$, it follows that $\thresholdrange{1} \leq \thresholdrange{2} \ldots  \leq \thresholdrange{L}$. 

\end{proof}

\subsection{Proof of Theorem~\ref{PR-THRE}}
    \label{app:propo} 
    \begin{proof}
$(i)$ The following condition has to be satisfied for the optimal policy to be always using price $P_1=0$ :
	\begin{equation}
	\Delta H(x,l) < 0,\;\forall  l \in {\mathcal L} / {\mathcal L}_1, \;x=1,..,M.
	\label{27}
	\end{equation}
	From  (\ref{H1}) and  (\ref{H0}),  the condition (\ref{27}) yields
	\begin{eqnarray}
&&	\sum_{l'\in \mathcal L} {\lambda}_{ll'}(V(1,l')-V(x+1,l')) < p(l),\nonumber\\
&&\hspace{2cm} \forall  l \in {\mathcal L} / {\mathcal L}_1, x=1,..,M-1.\label{M1}
\end{eqnarray}
Since the value function $V$ is non-increasing,  the  conditions in (\ref{M1})  are  satisfied if and only if
\begin{multline}
	\sum_{l'\in {\mathcal L}_1} {\lambda}_{ll'}[V(1,l')-V(M,l')] + \\ \sum_{l'\notin {\mathcal L}_1} {\lambda}_{ll'}[V(1,l')-V(M,l')] < p(l),
\label{28}  
\end{multline}
$\forall  l \in {\L} / {\L}_1.$  Let us assume that the device cannot upload data at any location. $l\in {\mathcal L}_1$:  from~\eqref{eq:bellman} we have 
\begin{eqnarray}
V(x,l) - V(x-1, l) = U(x) - U(x-1),\; \forall l\in  {\mathcal L}_1\hspace{0.6cm}\label{L1}\\
V(x,l) - V(x-1, l) = U(M) - U(x-1),\; \forall  l\in {\mathcal L} / {\mathcal L}_1\label{L0}
\end{eqnarray}
(\ref{L1}) and (\ref{L0}) yield, respectively,
\begin{eqnarray}
V(1,l) - V(M, l) = U(1) - U(M),\; \forall l\in  {\mathcal L}_1\hspace{01.1cm}\label{L11}\\
	V(1,l) - V(M, l) = \sum_{x=1}^{M}(U(M) - U(x)),\; \forall l\in {\mathcal L} / {\mathcal L}\label{L00}
\end{eqnarray}
Plugging these  values of  $V(1,l) - V(M, l)$ into (\ref{28}) gives the condition (\ref{p1}). 

The derivations of $(ii)$ and $(iii)$ are similar to the above proof. 
\end{proof}

\subsection{Proof of Corrolary~ \ref{c2}}
    \label{pr:corro2}     
\begin{proof}
$(i)$ The following condition has to be satisfied for the optimal policy to never upload :
	\begin{equation}
	\Delta H(x,l) < 0,\;\forall  l \in {\mathcal L}  \;x=1,..,M.
	\label{C27}
	\end{equation}
	From  (\ref{H1}) and  (\ref{H0}),  the condition (\ref{27}) yields
	\begin{eqnarray}
&&	\sum_{l'\in \mathcal L} {\lambda}_{ll'}(V(1,l')-V(x+1,l')) < p(l),\nonumber\\
&&\hspace{2cm} \forall  l \in {\mathcal L} , x=1,..,M-1.\label{CM1}
\end{eqnarray}
Since the value function $V$ is non-increasing,  the  conditions in (\ref{CM1})  are  satisfied if and only if
\begin{multline}
	\sum_{l'\in {\mathcal L}} {\lambda}_{ll'}[V(1,l')-V(M,l')]  < p(l),
\label{C28}  
\end{multline}
$\forall  l \in {\L}$.  Let assume the device can not upload data at any location $l\in {\mathcal L}$:  from~\eqref{eq:bellman} we have 
\begin{eqnarray}
V(x,l) - V(x-1, l) = U(M) - U(x-1),\; \forall  l\in {\mathcal L}\label{CL0}
\end{eqnarray}
(\ref{CL0}) yields,
\begin{eqnarray}
	V(1,l) - V(M, l) = \sum_{x=1}^{M}(U(x) - U(M)),\; \forall l\in {\mathcal L} \label{CL00}
\end{eqnarray}
Plugging these  value of  $V(1,l) - V(M, l)$ into (\ref{C28}) gives the condition (\ref{c1}). 

\end{proof}

\subsection{Proof of Lemma~\ref{lem1}} \label{app:lemma1}

If the temperature is fixed, the transition matrix of the resulting time-reversible Markov chain, $Q_T=(q_T(\thresholdvector, \thresholdvector'))$, $0\leq \threshold_i,\threshold_i'\leq \threshold_{max}$, is given 
by: 
\begin{align} \label{eq:mattrans}
&q_T(\thresholdvector, \thresholdvector')= \\
& \quad\left\{
\begin{array}{ll}
\frac{q^*(\thresholdvector, \thresholdvector')  \pi_T(\thresholdvector')}{ \pi_T(\thresholdvector)}, &  \hspace{-2.5cm}\textrm{ if }  \frac{\pi_T(\thresholdvector')}{ \pi_T(\thresholdvector)}<1 \textrm{ and }  \thresholdvector\not=\thresholdvector'\\
q^*(\thresholdvector, \thresholdvector'),  & \hspace{-2.5cm} \textrm{ if }  \frac{\pi_T(\thresholdvector')}{ \pi_T(\thresholdvector)}\geq 1 \textrm{ and }  \thresholdvector\not=\thresholdvector'\\
q^*(\thresholdvector, \thresholdvector)  +\sum_{{\bf z}}q^*(\thresholdvector, {\bf z}) \left(1-\min\{1, \frac{\pi_T({\bf z})}{ \pi_T(\thresholdvector)}\}\right), &  \hspace{-0.3cm} \textrm{ if }  \thresholdvector=\thresholdvector'.
\end{array}
\right.
\end{align}
In what follows, we drop subscript $T$ to simplify presentation.

\begin{proof}
The above discrete time Markov chain, with transition probability matrix given by~\eqref{eq:mattrans}, has stationary distribution given by $\pi(\thresholdvector)$  if the following balance equations hold
\begin{equation}
\sum_{\thresholdvector \in S} \pi(\thresholdvector) q(\thresholdvector,\thresholdvector') = \pi(\thresholdvector'). \label{eq:balanc}
\end{equation}
Next, we show that the above equality holds. Indeed,
\begin{align}
&\sum_{\thresholdvector \in S} \pi(\thresholdvector) q(\thresholdvector,\thresholdvector') = \\
& \quad = \sum_{\thresholdvector \in C_1(\thresholdvector')}\pi(\thresholdvector) q(\thresholdvector,\thresholdvector') + \sum_{\thresholdvector \in C_2(\thresholdvector')}\pi(\thresholdvector) q(\thresholdvector,\thresholdvector')  + \nonumber \\ & \quad \quad \quad + \pi(\thresholdvector') q(\thresholdvector',\thresholdvector')
\end{align}
where
\begin{align}
C_1(\thresholdvector') &= \lbrace \thresholdvector  \in \thresholdvector_{\epsilon, d}  \setminus \{ \thresholdvector' \} \ \textbar \ \pi(\thresholdvector')<\pi(\thresholdvector)\rbrace \\ 
C_2(\thresholdvector') &= \lbrace \thresholdvector  \in \thresholdvector_{\epsilon, d}  \setminus \{ \thresholdvector' \} \ \textbar \ \pi(\thresholdvector')\geq\pi(\thresholdvector)\rbrace
\end{align}
Then,
\begin{align}
&\sum_{\thresholdvector \in S} \pi(\thresholdvector) q(\thresholdvector,\thresholdvector') =\\
 & = \sum_{\thresholdvector \in C_1(\thresholdvector')}\pi(\thresholdvector) \frac{q^*(\thresholdvector,\thresholdvector')\pi(\thresholdvector')}{\pi(\thresholdvector)} + \sum_{\thresholdvector \in C_2(\thresholdvector')}\pi(\thresholdvector) q^*(\thresholdvector,\thresholdvector') \nonumber  \\ 
 & \quad + \pi(\thresholdvector') \left( q^*(\thresholdvector',\thresholdvector')  +  \sum_{z \in C_2(\thresholdvector')}q^*(\thresholdvector',z)\left(1-\frac{\pi(z)}{\pi(\thresholdvector')}\right) \right)\\ 
& = \sum_{\thresholdvector \in C_1(\thresholdvector')}\pi(\thresholdvector')q^*(\thresholdvector,\thresholdvector') \nonumber  \\
& \quad + \cancel{\sum_{\thresholdvector \in C_2(\thresholdvector')}\pi(\thresholdvector) q^*(\thresholdvector,\thresholdvector')}  + \pi(\thresholdvector') q^*(\thresholdvector',\thresholdvector')  \nonumber \\
& \quad  + \pi(\thresholdvector')\sum_{z \in C_2(\thresholdvector')}q^*(\thresholdvector',z)  - \cancel{\sum_{z \in C_2(\thresholdvector')}q^*(\thresholdvector',z) \pi(z) }
\end{align}
Finally, as $Q^*$ is symmetric and stochastic,  $q^*(\thresholdvector,\thresholdvector')= q^*(\thresholdvector',\thresholdvector)$, 
\begin{align}
&\sum_{\thresholdvector \in S} \pi(\thresholdvector) q(\thresholdvector,\thresholdvector') =\\
& =\pi(\thresholdvector') \left( \sum_{\thresholdvector \in C_1(\thresholdvector')}q^*(\thresholdvector,\thresholdvector')  + \sum_{\thresholdvector \in C_2(\thresholdvector')}q^*(\thresholdvector,\thresholdvector') + q^*(\thresholdvector',\thresholdvector')\right) \\
& = \pi(\thresholdvector'),
\end{align}
which shows that~\eqref{eq:balanc} holds and concludes the proof.
\end{proof}

\end{document}